\renewcommand\footnotetextcopyrightpermission[1]{}
\newtheorem{definition}{Definition}
\newif\ifsubmission
    \newcommand{\todo}[1]{}
    \newcommand{\adnote}[1]{}
    \newcommand{\mcnote}[1]{}
    \newcommand{\ks}[1]{}
    \newcommand{\iea}[1]{}
    \newcommand{\rc}[1]{}
    \newcommand{\dong}[1]{}
    \newcommand{\tododong}[1]{}
    \newcommand{\todo}[1]{\textit{\textcolor{red}{\textbf{TODO} #1}}}
    \newcommand{\adnote}[1]{\textit{\textcolor{green}{[alice]: #1}}}
    \newcommand{\mcnote}[1]{\textit{\textcolor{blue}{[marco]: #1}}}
    \newcommand{\ks}[1]{\textit{\textcolor{green}{[ks]: #1}}}
    \newcommand{\iea}[1]{\textit{\textcolor{blue}{[iea]: #1}}}
    \newcommand{\rc}[1]{\textit{\textcolor{blue}{[rc]: #1}}}
     \newcommand{\dong}[1]{\textit{\textcolor{blue}{[dong]: #1}}}
\newcommand{\sys}{\textsc{Citadel++}\xspace}
\newcommand{\veps}{\varepsilon}
\newcommand{\norm}[1]{\Vert #1 \Vert}
\newcommand{\ee}{{\rm e}\hspace{1pt}}
\newcommand{\dd}{\hspace{1pt}{\rm d}\hspace{0.5pt}}
\newcommand{\abs}[1]{\left| #1 \right|}
\newtheorem{thm}{Theorem}
\begin{document}

\title{Protecting Confidentiality, Privacy and Integrity in Collaborative Learning}
\author{Dong Chen$^{\dag,*,\P}$ \quad Alice Dethise$^{\S,\P}$ \quad Istemi Ekin Akkus$^{\S}$ \quad Ivica Rimac$^{\S}$ \quad Klaus Satzke$^{\S}$  \\
Antti Koskela$^{\S}$ \quad Marco Canini$^{\ddag}$ \quad Wei Wang$^{\dag}$ \quad Ruichuan Chen$^{\S}$}
\affiliation{%
  \institution{$^{\S}$Nokia Bell Labs \qquad $^{\dag}$HKUST \qquad $^{\ddag}$KAUST}
  \country{}
  }

\thanks{
$^*$Work done during internship at Nokia Bell Labs.
}
\thanks{
$^\P$Equal contribution.
}




\renewcommand{\shortauthors}{Dong Chen, Alice Dethise, Istemi Ekin Akkus, Ivica Rimac, Klaus Satzke, Antti Koskela, Marco Canini, Wei Wang, Ruichuan Chen}





\begin{abstract}

A collaboration between dataset owners and model owners is needed to facilitate effective machine learning (ML) training.  
During this collaboration, however, dataset owners and model owners want to protect the \emph{confidentiality} of their respective assets (i.e., datasets, models and training code), with the dataset owners also caring about the \emph{privacy} of individual users whose data is in their datasets.
Existing solutions either provide limited confidentiality for models and training code, or suffer from privacy issues due to collusion.

\if 0
This paragraph is too similar to the intro.
Dataset owners and model owners face a dilemma:
dataset owners possess valuable data to be utilized for machine learning (ML) models but lack the expertise to do so.
Model owners can provide the expertise with sophisticated models and training code, but do not have much real-world data to make their models robust.  Therefore, a collaboration between dataset owners and model owners is needed.  
During this collaboration, however, they want to protect the confidentiality of their assets, with the dataset owners also caring about the privacy of individual users whose data is in their datasets.
Existing solutions either provide limited confidentiality for models and training code, or suffer from privacy issues due to collusion.
\fi

We present \sys, a collaborative ML training system designed to \emph{simultaneously} protect the confidentiality of datasets, models and training code as well as the privacy of individual users.
\sys enhances \emph{differential privacy} mechanisms to safeguard the privacy of individual user data while maintaining model utility.
By employing Virtual Machine-level Trusted Execution Environments (TEEs) 
as well as the improved \emph{sandboxing} and \emph{integrity} mechanisms through OS-level techniques,
\sys effectively preserves the confidentiality of datasets, models and training code, and enforces our privacy mechanisms even when the models and training code have been maliciously designed.
Our experiments show that \sys provides model utility and performance while adhering to the confidentiality and privacy requirements of dataset owners and model owners,
outperforming the state-of-the-art privacy-preserving training systems by up to 543$\times$ on CPU and 113$\times$ on GPU TEEs.

\if 0
We present \sys, a scalable collaborative ML train-
ing system designed to protect the confidentiality of datasets,
models and training code. BASTIONS incorporates differ-
ential privacy (DP) techniques to safeguard the privacy of
individual user data while maintaining model utility. By em-
ploying Trusted Execution Environments (TEEs) that can
protect entire Virtual Machines (VMs) and improved integrity
protection techniques, BASTIONS can effectively enforce pri-
vacy mechanisms, allowing it to preserve the confidentiality
of models and training code, even when they may have been
maliciously designed. 
\fi

\end{abstract}



  

\settopmatter{printfolios=true}
\maketitle

\section{Introduction}
\label{sec:intro}

High-quality machine learning (ML) training requires a well-designed model and representative datasets. 
On the one hand, \emph{dataset owners} own siloed user data but may not have extensive ML 
expertise.
For instance, hospitals have patient data and banks maintain customers' financial records; 
however, they may have limited expertise to leverage advanced ML technologies to utilize their datasets.
On the other hand, \emph{model owners} have ML expertise and may have developed 
sophisticated models, but they need datasets from diverse sources to train their models.
For instance, a BioTech company has developed a drug design model but needs data 
from hospitals to generalize the model.
As a result, the collaboration between dataset owners and model owners is required.

Although there are various attempts at enabling this much-needed collaboration~\cite{genes2021defs, genes2022data, aws-data-exchange, databricks-marketplace, datarade.ai},
dataset owners and model owners do not always trust each other.
Dataset owners care about their datasets. For instance, they may want to monetize these datasets via marketplace~\cite{aws-data-exchange, snowflake-marketplace, databricks-marketplace}. In addition, these datasets may contain sensitive user data, raising privacy concerns
and being subject to various privacy regulations (e.g., EU AI Act~\cite{eu-ai-act}, GDPR~\cite{gdpr}, CCPA~\cite{ccpa}, HIPAA~\cite{hipaa}).
Any leak of such sensitive data can lead to severe fines, business impact and reputation loss.

On the other hand, today's ML models are increasingly complex and innovative. 
Model owners spend significant time and resources designing the model structure and the way 
it can be trained efficiently; therefore, models and training code are often considered as key intellectual properties.
Although there are open-sourced foundation models 
in domains such as natural language processing~\cite{llama, deepseek} and image generation~\cite{diffusion}, there are many confidential models that have been employed (and will likely be employed in the future) in many industries, e.g., drug design models in pharmaceuticals and stock price prediction models in finance.

As a result, there is a dilemma.  Even though the collaboration between dataset owners and model owners is required and highly beneficial for both sides, neither side is willing to share their assets.
Existing attempts to solving this dilemma either address it partially or impose significant computation and communication overheads, impeding scalability.
For instance, federated learning (FL)~\cite{konevcny2016federated, mcmahan2017communication, li2020federated, kairouz2021advances, mcmahan2016federated} enables dataset owners to train a model over local datasets and synchronize with a server that aggregates all the local model updates.  FL aims to achieve data privacy as local datasets are not shared with the server; however, recent work has shown that a malicious server
can infer sensitive information via attacks such as data reconstruction~\cite{zhu2019deep, yin2021see, geiping2020inverting, hitaj2017deep} 
and membership inference~\cite{shokri2017membership, nasr2019comprehensive, carlini2022membership}.
In addition, FL lacks model confidentiality because the model is made available at the dataset owners.
In parallel, the community has also been exploring collaborative training based on cryptographic approaches such as homomorphic encryption (HE)~\cite{bost2014machine, brickell2007privacy, graepel2012ml, li2017multi, aono2017privacy} or secure multi-party computation (SMPC)~\cite{nandakumar2019towards, tian2022sphinx, hesamifard2018privacy, sav2020poseidon, mohassel2017secureml, agrawal2019quotient, mohassel2018aby3, wagh2019securenn, patra2020blaze, chaudhari2019trident, koti2021tetrad, ng2023sok, wagh2020falcon, liu2024pencil}.  
However, both HE- and SMPC-based approaches impose significant computation and communication overheads; therefore, they can only be applied to small datasets and models~\cite{liu2021machine}.

Recently, confidential computing has become an emerging paradigm that leverages the hardware-based Trusted Execution Environments (TEEs) to provide assurances for confidentiality and integrity when executing code on data within a TEE~\cite{confidential2022technical, confidential2020confidential}.
This paradigm has gained significant traction among CPU and GPU manufacturers~\cite{intel-sgx-1, mckeen2013innovative, amd-sev, intel-tdx, arm-cca, apple-pcc, nvidia-cc}, and has become an essential offering of all major cloud providers~\cite{apple-pcc, azure-cc, google-cc, aws-nitro}.
Specifically, a TEE utilizes a hardware root-of-trust and enables users to access cloud computing resources without trusting the provider's software, including the privileged hypervisor and operating system.

Because of TEEs' generality, strong guarantees, high efficiency and wide availability, they have been regarded as key to safeguarding data and models in ML processes~\cite{aws-clean-rooms, snowflake-clean-rooms, huba2022papaya, impact-of-cc, future-cc1}.
Indeed, a large body of research has leveraged TEEs for privacy-preserving ML computations~\cite{ohrimenko2016oblivious, tramer2018slalom, hunt2018chiron, lee2019occlumency, mo2021ppfl, quoc2021secfl, mo2020darknetz, hynes2018efficient, quoc2020securetf, hunt2020telekine, zhang2021citadel, hua2022guardnn, vaswani2023confidential, mai2023honeycomb}.
Unfortunately, none of these approaches can fully protect the datasets, model and training code.
Furthermore, as a model and its training code could be maliciously designed to leak or embed private data during training, previous approaches 
require the model and training code to be \emph{apriori} inspected and agreed upon to ensure that private data cannot be intentionally leaked while running inside TEEs.
For instance, Citadel~\cite{zhang2021citadel} is 
a state-of-the-art collaborative ML training system that aims to protect the confidentiality of both training datasets and the model using TEEs.
Citadel, similar to traditional FL, partitions the training code into two parts (i.e., data handling part and model updating part), and distributes them across multiple SGX~\cite{intel-sgx-1} enclaves for confidentiality and scalability.
To enhance data privacy, Citadel imposes zero-sum masking onto model updates exchanged between these two parts.
However, Citadel still requires the data handling code to be inspectable to gain the trust of the dataset owners~\cite{zhang2021citadel}, cannot use GPUs, and is still vulnerable to various privacy attacks that can be launched in FL~\cite{zhu2019deep, yin2021see, geiping2020inverting, hitaj2017deep, shokri2017membership, nasr2019comprehensive, carlini2022membership}.


In this paper, we propose \sys, a scalable collaborative training system that \emph{simultaneously} protects the confidentiality of datasets, models and training code as well as the privacy of individual users whose data is present in those datasets.
The design of \sys is inspired by Citadel's general architecture~\cite{zhang2021citadel}.  Specifically, like Citadel, we also employ TEEs in \sys, and adopt the FL-style programming paradigm 
that greatly lowers the entry barrier for today's ML practitioners,
where
the training code is split into two parts: 1) the \emph{data handling} code directly operates on datasets to compute gradients, and 2) the \emph{model updating} code aggregates gradients to update the model.
However, such a design is not enough to achieve our goals to protect the full data and model assets supplied by dataset owners and model owners.
We advance this design with three substantial contributions 
on strong privacy, sandboxing and integrity. 



First, \sys creates a \emph{privacy barrier} between the data handling code and the model updating code by
employing a set of privacy mechanisms based on DP-SGD~\cite{abadi2016deep}, in conjunction with masking, dynamic gradient clipping and noise correction.
This privacy barrier ensures that any computational results (e.g., gradients) sent from data handling code to model updating code are differentially private with high utility, and little information about any individual user's data can be gained through such results, even in the presence of collusion among dataset owners and model owners.

\if 0
Second, \sys designs a \emph{sandbox}
to enforce appropriate access control
(e.g., no network or shared resources, isolated file systems)
and to strictly regulate the data flow in the system during the training.
This regulation at runtime is needed due to the protections enforced by the TEEs, preventing detection of a malicious model or training code trying to leak private data (e.g., by writing data to external storage, bypassing the privacy barrier).
\sys's sandboxing mechanism ensures that the data exchanged between data handling and model updating components flows \emph{only} 
through our privacy barrier, removing the need to inspect the model and the training code, and satisfying the confidentiality requirements of the model owner.
\fi

Second, training a model within TEEs protects the 
confidentiality of the datasets, model and training code; however,
this leads to a situation where no one can detect whether the model and its training code are maliciously attempting to leak private data (e.g., by writing data to external storage, or bypassing our privacy barrier).
Therefore, \sys designs a \emph{sandbox} to enforce appropriate access control (e.g., no network or shared resources, isolated file systems) and strictly regulate the data flow in the system to ensure that the data exchanged between the data handling and model updating components flows \emph{only} through our privacy barrier. In doing so, the privacy of user data is guaranteed even if the model and training code are maliciously designed, thus removing the need to inspect them.
\sys employs virtual machine (VM)-level TEEs, e.g., AMD SEV-SNP~\cite{amd-sev} and Intel TDX~\cite{intel-tdx}, which provide the required functionalities for sandboxing and also enable the usage of confidential GPUs, e.g., NVIDIA H100~\cite{nvidia-cc}.
We note that existing systems~\cite{ryoan, zhao2023reusable, mei2024cabin, zhou2024verismo} try to solve this issue via mechanisms based on software fault isolation in application enclaves on Intel SGX~\cite{intel-sgx-1} or the VM privilege levels on AMD SEV-SNP~\cite{vmpl}.  These mechanisms are restrictive for isolating untrusted model and training code from the rest of our system, or depend on a specific AMD feature, specific hypervisor and modified kernel (can be complementary to our sandboxing mechanism, if needed).




Third, we need to ensure the \emph{integrity} of the privacy barrier and sandboxing mechanisms,
which are crucial in protecting the confidentiality and privacy of the datasets, model and training code as well as their correct execution.
Although TEEs provide integrity protections for code running within them,
current remote attestation mechanisms to ensure that the intended code is deployed and run correctly
are limited.
For instance, current VM-level TEEs 
running Confidential VMs (CVMs) primarily allow attestation of only the initial memory content,
overlooking critical parts such as kernel and root file system~\cite{wilke2024snpguard}.
Besides, the CVM owner (e.g., \sys provider) can change the CVM state 
with extra actions that are not reflected in the attestation report.
To address these limitations, \sys utilizes Confidential Containers (CoCo)~\cite{coco}, and enhances its integrity mechanisms for CVM and container images
housing our privacy barrier and sandboxing mechanisms
to guarantee their correct execution.
We implemented the full \sys system. 
Our evaluation demonstrates that \sys matches the model utility of standard central DP-SGD mechanisms, while providing stronger privacy guarantees and negligible overheads with our privacy barrier.
The sandboxing mechanism with integrity in \sys adds reasonable overhead compared to a no-sandboxing baseline (e.g., $3.6\%$ overhead for training a transformer model).
When considering the full collaborative, confidential training, \sys is as efficient as non-confidential FL, demonstrating up to 543$\times$ speedup on CPU and 113$\times$ speedup on GPU compared to the state-of-the-art privacy-preserving training systems.


\if 0
remove due to space limit.
Next section presents the actors and the threat model.
\S\ref{sec:design} provides an overview of \sys' design along with a workflow of a training session.
Afterwards, we describe the privacy barrier, along with the additional mechanisms we utilize (\S\ref{sec:design-dp}).
Then we present the details of our sandboxing mechanism for potentially malicious training code (\S\ref{sec:sandbox}).
\S\ref{sec:integrity} explains how we achieve the integrity protections for the service code of \sys.
and our implementation, respectively.
We present our implementation in \S\ref{sec:implementation} and our evaluation in \S\ref{sec:evaluation}, including a detailed assessment of our privacy barrier with DP and \sys system performance, respectively.
Afterwards, we discuss limitations and implications (\S\ref{sec:discuss}) and present related work (\S\ref{sec:related}).
We conclude in \S\ref{sec:conclusion}.

\fi

\section{Actors, Assumptions and Goals}
\label{sec:assumptions}


\subsection{System Actors}
\label{sec:actors}

\noindent\textbf{Dataset owners} are entities that possess datasets which may
contain sensitive personal or organizational data 
(e.g., medical records of hospital patients, operational data from factory manufacturing lines).
We note that the ownership of data in a dataset may vary according to jurisdiction.
For instance, in Europe, individual users whose data is in a dataset are typically still considered as the owners of that data~\cite{gdpr}.
As a result, dataset owners care about not only the confidentiality of their datasets, but also the privacy of individuals whose data is present in those datasets.



\noindent\textbf{Model owners} are entities 
that create models and associated training code.
Model owners may own some datasets to develop and test their models and training code;
however, they do not possess enough and diverse datasets to generalize their models.
They want to collaborate with dataset owners to access such datasets.
During collaboration, they care about the confidentiality of their models and training code.

\noindent\textbf{\sys provider} is the entity that deploys and operates 
the \sys service to enable the collaborative training among model owners and dataset owners.
It interfaces with the cloud provider for the necessary infrastructure.

\noindent\textbf{Cloud provider} is the entity that provides 
the underlying compute, network and storage infrastructures with their respective services.
The compute infrastructure is equipped with TEEs,
such as AMD SEV-SNP~\cite{amd-sev} or Intel TDX~\cite{intel-tdx}.
This infrastructure has been widely accepted and available among all major cloud providers~\cite{apple-pcc, azure-cc, google-cc, aws-nitro}.

\subsection{Assumptions and Threat Model}
\label{sec:threat}

We assume an adversary can access privileged system stacks with the goal of learning confidential information:
The \emph{cloud provider} may manipulate the host OS kernel and process resources;
the \emph{\sys provider} may insert backdoors into and manipulate system components.
TEEs ensure the code and data loaded inside them are protected 
with confidentiality and integrity~\cite{confidential2022technical};
therefore, TEEs have been regarded as key to safeguarding data and models in ML processes~\cite{aws-clean-rooms, snowflake-clean-rooms, huba2022papaya, impact-of-cc, future-cc1}.
We note that there have been attacks on TEEs, such as covert and side channels~\cite{van2018foreshadow, hahnel2017high, xu2015controlled, li2022systematic, lee2017inferring, van2017telling, lee2020off, murdock2020plundervolt, schluter2024wesee} and physical attacks~\cite{demeulemeester24-badram, trikalinou2017taking, chen2021voltpillager, tang2017clkscrew}.
Defeating these attacks is an active research area, 
including mitigation strategies from TEE vendors~\cite{intel-sgx-vulnerabilities}, various software defenses using oblivious schemes to obfuscate timing and memory access patterns~\cite{hynes2018efficient, vanoverloop2025tlblur, hunt2020telekine, ohrimenko2016oblivious, shih2017t}, defense-in-depth ~\cite{cheng2024deta}, and fault tolerance techniques against physical rollback attacks~\cite{connell2024secret}.
Nonetheless, we leave such attacks outside this paper's scope and assume that the TEE guarantees are intact, but note that these defenses are complementary to \sys.

No \emph{actor}
inherently trusts others. 
Each actor may try to learn the confidential datasets, models and training code, either individually or by colluding with others (as long as they do not attack TEEs as assumed).
For instance, some dataset owners may collude 
with the model owner (e.g., by sharing their datasets) 
to learn the confidential datasets of other dataset owners.
In addition, the model and training code provided by a model owner can be maliciously designed
to break the confidentiality and privacy of datasets.
For instance, the training code may intentionally leak private data to storage, 
abuse system APIs~\cite{zhu2025model}, or attempt to bypass \sys' protection mechanisms.
Finally, the trustworthiness and usefulness of datasets can be efficiently 
checked via TEEs without violating confidentiality~\cite{akkus2024praas},
ensuring that dataset owners provide their datasets 
according to their agreements with other dataset owners and model owner.

\if 0
No \iea{training participant} in \sys is implicitly trusted, and they may actively 
try to learn or leak each other's contributions.
\iea{The \sys provider interfaces with the model owners and dataset owners, 
so that the underlying cloud provider is not directly visible to the training participants.
As we assume no covert channels, the cloud provider may not trivially correlate training session activity with specific model owners for collusion.
On the other hand,}
some dataset owners may collude 
with the model owner (e.g., by sharing their datasets) 
to learn the confidential datasets of other dataset owners.
In \sys, we assume dataset owners provide their datasets 
according to their agreements with other dataset owners and model owners.
The trustworthiness and usefulness of datasets can be efficiently proven 
and checked via TEEs without violating confidentiality~\cite{akkus2024praas}, 
\iea{requiring approvals from all training participants,
so that they cannot be trivially manipulated to poison the model and attack the privacy of other dataset owners.}
The model and training code provided by a model owner can also be malicious
to break the confidentiality and privacy of datasets.
For instance, the training code may write data to storage, 
abuse system APIs~\cite{zhu2025model}, or attempt to bypass system mechanisms.

\fi







\subsection{System Goals}
\label{sec:goals}

\noindent\textbf{Dataset/Model/Code Confidentiality:} Assets provided by different participants should stay confidential.
Datasets may contain sensitive information. Similarly, ML models including their parameters, structures and training code constitute intellectual properties.
Therefore, protecting the confidentiality of these provided assets is important.
This protection should apply not only when these assets are at rest and in transit but also in use during training.

\noindent\textbf{Data Privacy:} A dataset may be from one organization (e.g., operational data from a factory), but may also consist of data from multiple individual users (e.g., medical records of patients of a hospital).
Individual users care about the privacy of their own data.
Such data should stay differentially private, a widely established mechanism for mitigating privacy concerns~\cite{dwork06differential}, and little information about any individual user's data should be gained or inferred in \sys.  Note that this privacy protection goes beyond what the aforementioned dataset confidentiality provides for a dataset as a whole.

\begin{figure*}[t]
	\centering
    \includegraphics[clip=true, trim={0 80 20 65}, width=0.69\linewidth]{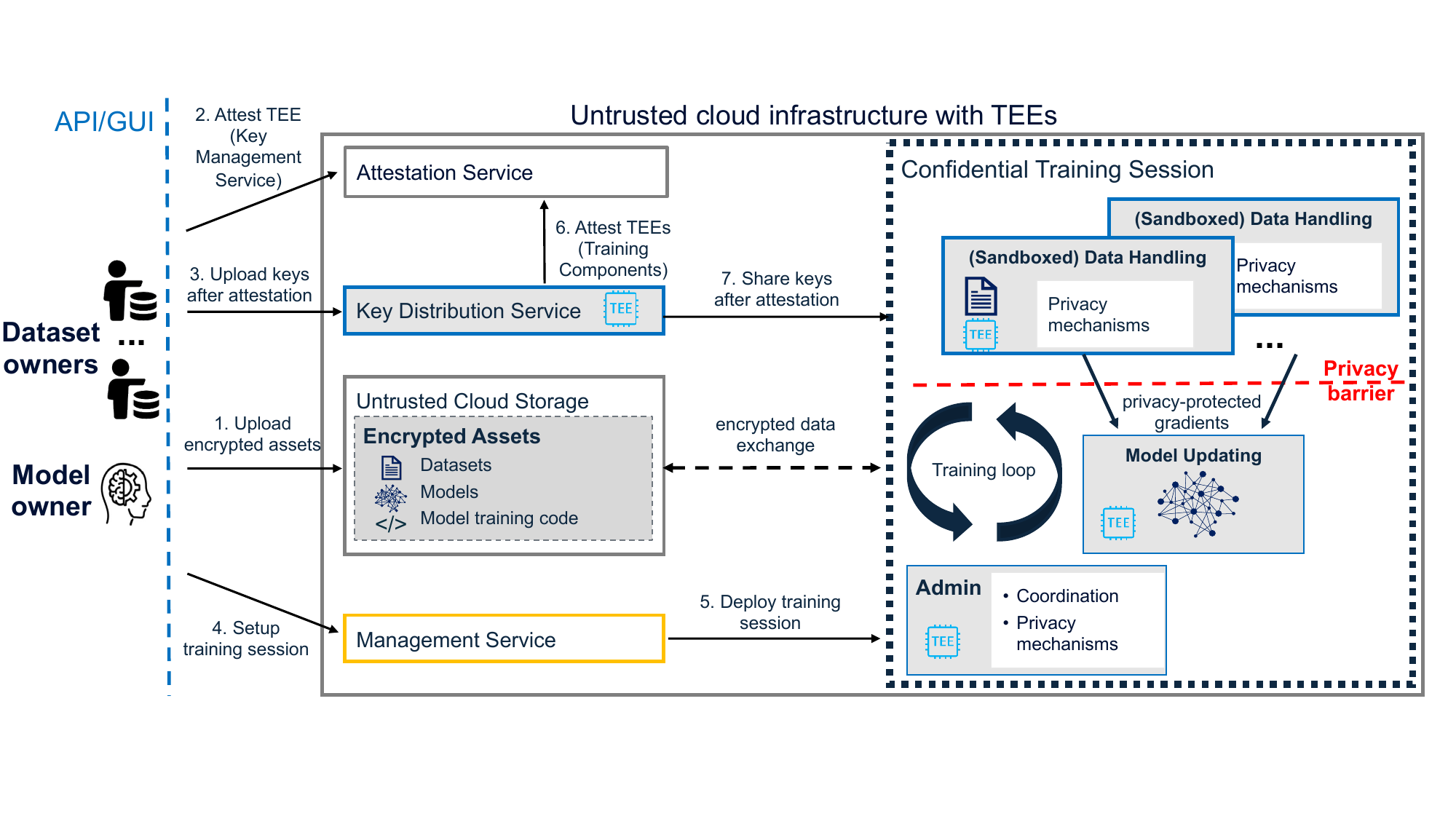}
	\caption{High-level overview of \sys.}
	\label{fig:overview}
\end{figure*}

\noindent\textbf{Code Integrity:} The code running in \sys should be integrity-protected.
This protection applies to both the training code (from the model owner) and the \sys service code.
The service code includes our system mechanisms (\S\ref{sec:design-dp}, \S\ref{sec:sandbox} and \S\ref{sec:integrity}), and is crucial for achieving our confidentiality and privacy goals for datasets, models and training code.

\noindent\textbf{Performance:} The protection mechanisms in \sys should not affect the training performance.
\sys should achieve no worse (often better) results for model accuracy and training time, compared to the state-of-the-art confidential training systems with weaker guarantees (e.g., require inspectable model and training code, cannot simultaneously protect datasets, model and training code).
\sys should also take advantage of accelerators (e.g., GPUs) without losing the aforementioned protections.

\section{\sys Design Overview}
\label{sec:design}




\subsection{Background: TEE Capabilities}

A Trusted Execution Environment (TEE) is a secure area within a processor.
This area is isolated from the untrusted environment via hardware, providing confidentiality and integrity assurances for the code and data inside~\cite{confidential2022technical, confidential2020confidential}.
With remote attestation~\cite{costan2016intel, haldar2004semantic}, users can authenticate the TEE hardware, verify its trusted state, and check if the expected code is running via the cryptographic measurement.

\subsection{System Components and Mechanisms}
\label{sec:design:overview}

Figure~\ref{fig:overview} shows a high-level overview of \sys.
The system components enable a collaborative training session between the dataset owners and model owner, while providing confidentiality, privacy and integrity protections.
These components run in a cloud infrastructure equipped with TEEs and a storage service that is considered untrusted.

In \sys, collaborative training follows a similar paradigm to FL, ensuring easy adoption by ML practitioners.
Specifically, the model owner's model training code is split in two parts: data handling code and model updating code.
For each dataset owner, there is one \textbf{data handling component} running in a TEE, which uses the data handling code to process a dataset owner's data and generate local model updates confidentially.
For the model owner, there is a single \textbf{model updating component} running in a TEE, which uses the model updating code to aggregate all the local model updates and update the global model, also confidentially.
These components are coordinated by another TEE component, the \textbf{admin component}.
It is responsible for synchronizing the training iterations among the data handling and model updating components as well as setting up our privacy barrier mechanism.
Citadel~\cite{zhang2021citadel} follows a similar architecture.

Besides their respective assets loaded at runtime, all TEE components also contain the \sys service code that is open-sourced to gain trust from dataset owners and model owners.
This service code includes:
1) the \textbf{privacy barrier} mechanism that ensures local model updates exchanged between data handling and model updating components do not leak private information (\S\ref{sec:design-dp}), 
2) the \textbf{sandboxing} mechanism that isolates the potentially malicious model and data handling code
from the \sys service code, regulates the information flow in the system, and ensures our privacy barrier cannot be bypassed (\S\ref{sec:sandbox}), and
3) various \textbf{procedures} that handle the coordination of the training loop as well as common processing actions including encrypting/decrypting all communications among TEE components and untrusted storage (i.e., encryption-in-transit).
\sys service code is key to protecting the confidentiality and privacy of the datasets, model and training code, as well as the correct execution of the training code; thus, we must protect the integrity of the service code.
While part of this protection is provided by TEEs, it is limited and unsatisfactory (\S\ref{sec:integrity}).
Thus, we have extra \textbf{integrity mechanisms} for TEE components to ensure the \sys service code cannot be maliciously altered (\S\ref{sec:integrity}).

Recall that one of the main goals of \sys is to protect the confidentiality of various training assets (e.g., datasets, models and training code) while running in a cloud infrastructure.
The cloud infrastructure (except TEEs) and its \textbf{storage service} are untrusted.
Therefore, all training assets stored in the cloud storage before, during and after the training session are encrypted (i.e., encryption-at-rest).

To make these encrypted assets usable in the confidential training session with TEEs (i.e., encryption-in-use),
the corresponding keys have to be supplied by their respective owners to the relevant TEE components.
For ease-of-use, \sys manages these keys inside the \textbf{key distribution service} (KDS) running in a TEE.
The KDS is responsible for 1) storing keys of various assets along with the agreed training session configurations, and 2) sharing those keys with the relevant TEE components at the training start.
Before sharing the keys, the KDS remotely attests the relevant TEE components via the trusted \textbf{attestation service} provided by TEE vendors~\cite{scarlata2018supporting, johnson2016intel} or cloud providers~\cite{azure-attestation},
and only shares keys if these components run the correct service code and possess the agreed training session configuration.
The correctness of these components in their respective TEEs can be remotely attested~\cite{costan2016intel, haldar2004semantic},
because the service code of all \sys components, including the KDS, are open-sourced.

We design a \textbf{management service} that sets up training sessions and tracks various metadata 
(e.g., participants, whether all required assets are available for training to start, training progress).
The management service is untrusted. 
The KDS ensures the training session configuration supplied by the management service is consistent with its metadata.

\subsection{\sys Workflow}
\label{sec:design:workflow}
\sys supports multiple training sessions in parallel. 
Here, we give the workflow of a single session for clarity. 

After agreeing to join a collaborative training session, 
the dataset owners and model owner encrypt their respective assets (i.e., datasets, model and training code) with the keys of their choosing.
They then upload their encrypted assets to the untrusted storage (Step 1 in Figure~\ref{fig:overview}).
These encrypted assets need to be decrypted inside the TEEs running the \sys training components
for the training to proceed.
Thus, these keys need to be shared with the respective components.
To do so, the dataset owners and model owner upload their keys along with the agreed training session configuration to the key distribution service (KDS), after remotely attesting it via the attestation service (Steps 2 and 3).

Next, the management service starts the confidential training session and deploys the associated training components within TEEs:
a data handling component for each dataset owner in the session\footnote{Multiple data handling components can also be deployed for a large dataset.}, a model updating component, and an admin component.
The admin component is initialized with the training session configuration that refers to the assets and privacy parameters to be used (Steps 4 and 5).

After the training session starts, the service code in the data handling and model updating components register with the admin component, 
which configures them with the relevant assets and privacy parameters:
Each data handling component receives the corresponding dataset ID and
the data handling code ID.
The model updating component receives the initial model ID, the model updating code ID, and the test dataset ID (for assessing the trained model).
The service code in each component retrieves its still encrypted assets from the cloud storage
and requests their corresponding keys from the KDS.
The KDS remotely attests the training components via the attestation service, ensures the key metadata matches the training session configuration, and then shares the relevant keys with them (Steps 6 and 7).
Once shared, the KDS discards the keys, so that a model owner cannot run multiple training sessions with the same datasets without involving the dataset owners.

Finally, the training loop starts.
At the beginning of each iteration, the admin component notifies the other training components.
The admin component generates and distributes differentially-private masks (DP-masks) (\S\ref{sec:arch:dp_masks}) to the data handling components.
Each data handling component loads the current iteration's dataset items and model, generates the gradients (i.e., model updates)\footnote{In this paper, we use gradients and model updates interchangeably.} via the model owner's data handling code, and applies the DP-mask received from the admin component.
Afterwards, it encrypts the DP-masked gradients and sends them to the model updating component.
Upon receiving these DP-masked gradients, the model updating component decrypts and aggregates them with the current model to obtain the next iteration's updated model
via the model owner's model updating code.
The model updating code may also assess the model with a confidential test dataset,
before signalling the admin component
whether to stop training (e.g., desired accuracy, number of iterations reached).
The admin component also checks whether the training's allocated privacy budget is fully spent.
If any of these
conditions is satisfied,
the training is stopped.




\section{Privacy Barrier}
\label{sec:design-dp}

We first introduce the differentially-private stochastic gradient descent (DP-SGD) mechanism.
We then describe the three techniques that constitute our privacy barrier.

\subsection{Background: DP-SGD}
\label{sec:dpsgd}


DP-SGD~\cite{abadi2016deep} is a standard way to ensure the differential privacy (DP) guarantees for an ML training process.  The guarantees are achieved by perturbing the generated gradients (with bounded $L2$-norm) with Gaussian noise.  
A randomized mechanism $\mathcal{M}$ satisfies $(\veps,\delta)$-DP if for any two neighboring datasets $D$ and $D'$ (differing in at most one data item) and for any subset $S$ of possible outputs, it holds: $\Pr[\mathcal{M}(D) \in S] \leq \ee^\varepsilon \Pr[\mathcal{M}(D') \in S] + \delta$.  There are two privacy parameters, $\veps$ and $\delta$, which mainly control the trade-off between the model accuracy and privacy guarantee.

\if 0

Since each gradient in a training process leaks some information about the training dataset, DP-SGD includes a privacy accounting scheme which tracks the total privacy loss $(\veps, \delta)$ over multiple iterations of the training process and ensures that the cumulative effect of gradient updates still satisfies the desired privacy guarantee.
Recently, some optimal accounting schemes~\cite{koskela2020,gopi2021,zhu2022optimal} have been proposed which give tight bounds of privacy loss for DP-SGD.

\fi



%
%

%

\subsection{Differentially-Private Masking}
\label{sec:arch:dp_masks}
The potentially malicious model updating code 
(provided by a model owner) 
can utilize the received gradients to infer sensitive information about the training datasets~\cite{zhu2019deep, yin2021see, geiping2020inverting, hitaj2017deep, shokri2017membership, nasr2019comprehensive, carlini2022membership}.
Therefore, the gradients sent from data handling components need to be privacy-protected.

\begin{figure}[t]
	\centering
        \includegraphics[clip=true, trim={30 220 100 100}, width=\linewidth]{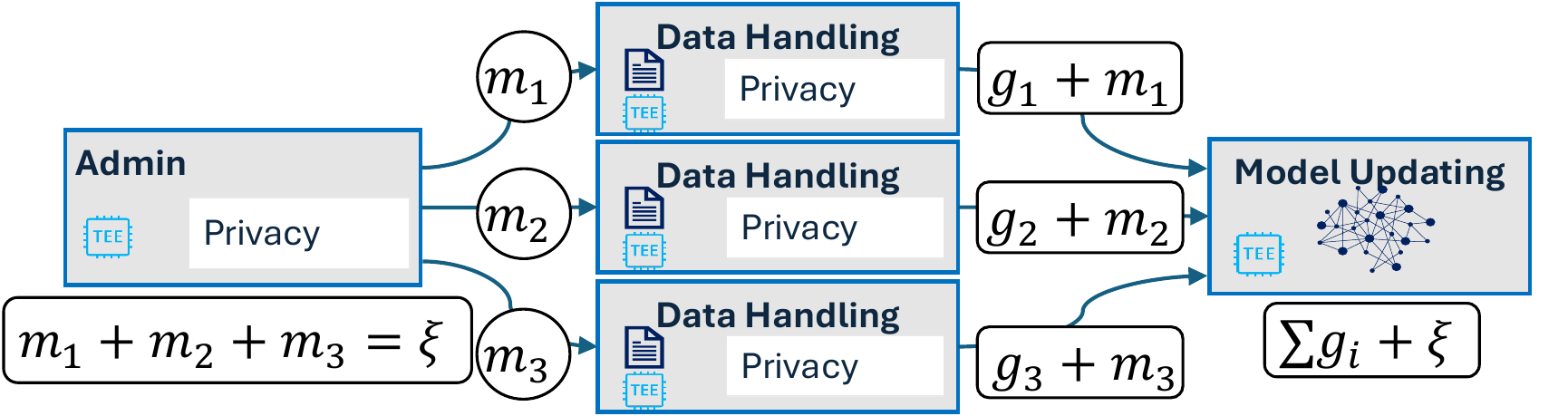}
	\caption{Differential-private masking.}
	\label{fig:masked_gradients}
\end{figure}

We design a differentially-private masking mechanism,
where random masks are generated such that their sum provides DP guarantees under the central DP model~\cite{abadi2016deep}.
These DP-masks are added to the gradients, making each individual gradient appear purely random.
Specifically, as shown in Figure~\ref{fig:masked_gradients}, the admin component generates the DP-masks, such that $ \sum_{i=1}^{n} m_i = \mathcal{N}(0, \sigma^2 C^2 \mathbf{I}) = \xi $, where $m_i$ is the DP-mask sent to the $i$-th data handling component, $\sigma$ is the noise scale parameter, and $C$ is the gradient clipping bound.
These masks are then distributed to the respective data handling components.
Each $i$-th data handling component applies its received DP-mask $m_i$ to its local gradient $g_i$ before sending the DP-masked gradient $g_i + m_i$ to the model updating component.
Finally, the aggregated gradient produced by the model updating component is equal to the sum of non-private gradients and the DP noise ($ \sum_{i=1}^{n} g_i + \xi $), providing the same privacy-utility trade-off for the model as DP-SGD~\cite{abadi2016deep}.
Therefore, by design, our DP-masking mechanism defeats a range of privacy attacks such as data reconstruction~\cite{zhu2019deep, yin2021see, geiping2020inverting, hitaj2017deep} 
and membership inference~\cite{shokri2017membership, nasr2019comprehensive, carlini2022membership}, and protects against privacy leakage due to collusion (\S\ref{sec:data_concerns}).

\subsection{Dynamic Gradient Clipping}
\label{sec:arch:dynamic_clipping}
DP-SGD computes a gradient for each individual data item, and enforces a bounded norm for each gradient by applying the gradient clipping~\cite{abadi2016deep}, which scales down any gradient whose norm exceeds a \emph{fixed} clipping bound $C$.
Since the added DP noise is also scaled with $C$, a larger $C$ retains more gradient information but requires more noise 
for the same level of DP guarantees,
while a smaller $C$ leads to clipping more gradients.
Both could degrade the model utility, highlighting a careful selection of $C$.

In \sys, 
we compute this clipping bound \emph{dynamically} in each iteration, by approximately clipping $r$ percentage of gradients. 
More specifically, each data handling component computes the norms of its generated gradients and constructs a count histogram of the norms using pre-defined bins $\{B_i\}_{i=1}^{k}$.
The admin component receives these count histograms from all data handling components, sums them up to obtain a histogram estimate of norms across all gradients, and adds Gaussian noise to the counts of this histogram with noise variance $\sigma_g^2$.
The norm matching the $r$-th percentile is then selected and returned to data handling components for clipping gradients with that norm. This dynamic gradient clipping is proven DP (\S\ref{sec:security_analysis_formal_dp}), and can work with our DP-masking mechanism.
Unlike prior adaptive schemes~\cite{andrew2021differentially}, our procedure does not require a lengthy adaptive process.






\if 0
\begin{enumerate}
	\item Given a set of percentiles .
	\item Each training enclave computes the norm of its gradients and orders them by magnitude, then sends the norms matching $p_1, p_2, \dots, p_m$ as well as the number of gradients to the admin enclave.
	\item The admin enclave receives the percentiles from all the training enclaves and builds an approximation of the distribution of gradient norm across all the gradients.
	\item The admin enclave selects the value $C^*$ matching the $r$-th percentile (with a small noise to preserve DP) as a clipping bound and sends it to the training enclaves.
	\item The training enclaves clips the gradients to norm $C^*$.
\end{enumerate}

\fi


\subsection{DP Noise Correction}
\label{sec:arch:noise_correction}





Recall that the DP noise is generated by the admin component inside a TEE (\S\ref{sec:arch:dp_masks}).
As a result, \sys has more control over how  noise is generated.
By exploiting this fact, 
we further develop a noise correction technique that adds more noise in each iteration and cancels it out afterwards
(different from commonly considered noise adaptation schemes~\cite{wang2024dpadapter, wu2022adaptive}), for stronger privacy protection during training.



Specifically, suppose the admin component generates the DP noise $\xi_t \sim \mathcal{N}(0, \sigma^2 C^2 \mathbf{I})$ in each iteration $t$.
Then, the noise added in iteration $t+1$ is $\xi_{t+1} - \lambda \cdot \xi_t$, where the noise correction coefficient $\lambda \in [0,1)$.
This technique removes some noise added in iteration $t$ by subtracting part of it in iteration $t+1$.
Since the noise value is kept within the admin component (in a TEE), 
this technique does not leak the value $\lambda \cdot \xi_t$. 
With a noise level of $\frac{\sigma}{1 - \lambda}$ for all $\lambda \in [0,1)$ (instead of $\sigma$), we achieve approximately the same privacy guarantee and utility for the \emph{final} model as compared to DP-SGD (\S\ref{sec:security_analysis_formal_dp} and Appendix~\ref{sec:app:errcorr}), but provide stronger privacy for gradients in each iteration \emph{during} training.

\section{Sandboxing Untrusted Code}
\label{sec:sandbox}

In \sys, we want to protect the confidentiality of the model and training code provided by a model owner,
meaning we cannot inspect them.
They may attempt to leak the confidential datasets they operate on (e.g., send data to storage or network~\cite{zhu2025model}, bypass our privacy barrier (\S\ref{sec:design-dp})).
This section describes how our sandboxing mechanism regulates the data flow and enforces the privacy barrier.

\begin{figure}[t]
	\centering
    \includegraphics[clip=true, trim={155 150 232 185}, width=0.75\linewidth]{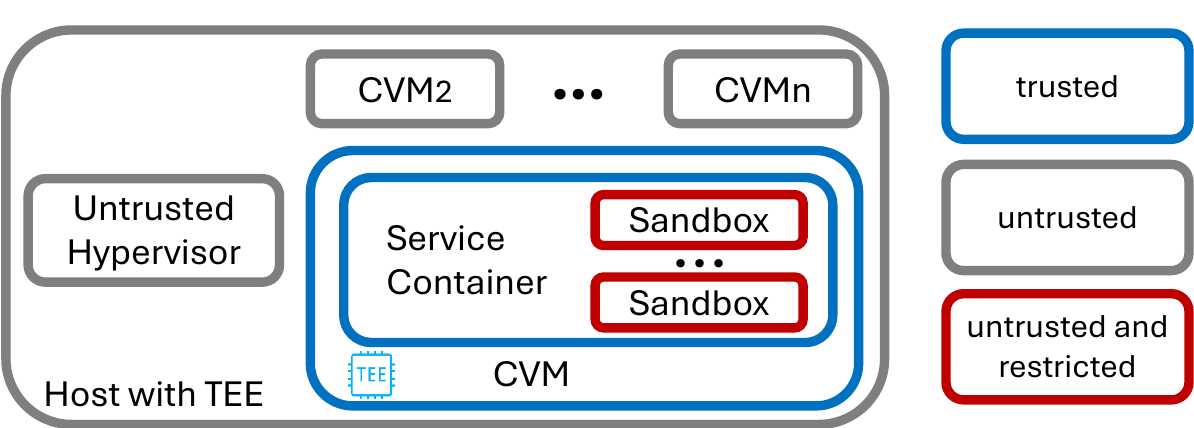}
	\caption{\sys service code stack.}
	\label{fig:service_stack}
\end{figure}

\subsection{Service Code Stack}
\label{sec:integrity:components}



The \sys service code stack aims to offer confidentiality on untrusted cloud hosts as well as the protection against potentially malicious training code and model supplied by the model owner.
Figure~\ref{fig:service_stack} shows a high-level view of this stack.
Each host is equipped with VM-level TEEs (e.g., AMD SEV-SNP~\cite{amd-sev}, Intel TDX~\cite{intel-tdx}) which enable the usage of TEE capabilities on GPUs (e.g., NVIDIA H100~\cite{nvidia-cc}) and provide isolation and confidentiality to secure entire VMs, thus creating confidential VMs (CVMs) that run
alongside other components (e.g., untrusted hypervisor, and other CVMs) on a host. 
Each \sys CVM runs a service container that houses the service code responsible for the privacy barrier and the sandboxes for various training components (i.e., data handling, model updating and admin).



For these training components, we employ Confidential Containers (CoCo)~\cite{coco} that uses CVMs to run containers,
whose images and data may be sensitive. 
However, CoCo assumes that the code running in the container will not break confidentiality at runtime.
Therefore,
we need additional mechanisms to prevent potentially malicious training code from breaking the confidentiality and privacy properties in \sys.
We build upon CoCo,
but further leverage privilege and access control mechanisms in its CVM.




\subsection{Sandboxing Details}

In \sys, a data handling component loads the data handling code provided by a model owner at runtime and executes it.
Due to confidentiality requirement, this code cannot be inspected and can be malicious.
Thus, its execution behavior 
needs to be restricted to only the functional requirements of data handling code: computing gradients, and returning them to the service code which applies the privacy barrier.



To enforce this restriction, we leverage OS-level privilege and Linux kernel namespaces mechanisms~\cite{linuxnamespace}.
The combination of these mechanisms on a process effectively creates a \emph{sandbox} for it. 
Recall that various \sys components in a training session run in their own CVMs with CoCo (Figure~\ref{fig:service_stack}).
Inside the CVM, the service code runs as a \emph{privileged} container.
Although running containers as privileged is not recommended in general settings~\cite{privilegedcontainerk8s},
we have the following reasoning:
First, the \sys provider makes all its service code open-sourced and inspectable,
which is also required for any remote attestation procedure.
Second, the service container is the only one running in this CVM, 
meaning that it does not have adverse effects on other system components.

The service container uses 
a separate \emph{unprivileged} process inside the container to execute the untrusted data handling code,
and controls its lifecycle.
This process is further isolated via Linux kernel namespaces to apply various restrictions.



\noindent\textbf{R1: Network isolation.}
The network connections inside the unprivileged process 
should be restricted to limited destinations,
so that \sys can control the data flow from the untrusted code to pass it through our privacy barrier, 
and prevent malicious code from sending data to unauthorized destinations.
For each unprivileged process executing the untrusted data handling code,
\sys applies a separate network namespace.
As a result, an unprivileged process in the service container can  communicate with \emph{only} one destination: the privileged service code that launched it.

\noindent\textbf{R2: Resource isolation.}
The \sys service code includes the privacy barrier and sandboxing mechanisms.
A malicious data handling code overwriting these mechanisms could bypass our privacy protections and break the confidentiality of datasets.
In addition, for DP guarantees, 
the data handling code should not accumulate state about dataset items via local files or other shared resources (e.g., memory, message queues and pipes to other processes).

To address these issues, the service container spawns a new unprivileged process at each iteration with a freshly-created isolated file system,
and passes to it only the dataset items required for that iteration.
For each unprivileged process it spawns, the service container creates a private mount from the original root file system and remounts all writable directories (e.g., \texttt{/tmp}, \texttt{/home}, \texttt{/var}) to provide a clean working root file system for the process.
This private mount prevents this process from modifying the service container's root file system,
protecting the service code from being overwritten and the file system from being used as a communication channel.
To prevent leakage via other shared resources, the service container applies a separate Inter-Process Communication (IPC)
namespace for each process, preventing shared memory, message queues and pipes.
\noindent\textbf{Optimizations.}
Besides satisfying these isolation requirements,
we also reduce any incurred performance overheads.
First, creating a process and loading the data handling code do not depend on the dataset and model.
This allows \sys to create a \emph{pool of sandboxed processes}, and use an already initialized one when needed.
Second, we observe that, in each iteration, the updated model from the model updating component
needs to be loaded into the data handling component to compute gradients.
Because the data handling code is in Python, an interpreted language,
this loading in a clean sandboxed process can be slow.
We \emph{warm up the code path of the model initialization function} in the data handling code by loading a dummy model at process initialization.
\section{Providing Service Integrity}
\label{sec:integrity}

The integrity of \sys service code is crucial to ensure that the privacy barrier (\S\ref{sec:design-dp}) and sandboxing (\S\ref{sec:sandbox}) mechanisms work as expected.
We use CoCo~\cite{coco} as our underlying framework 
and enhance its existing integrity properties,
so that the CVM running the \sys service container and the corresponding images are integrity-protected.


\subsection{Background: TEE Integrity Protection}
\label{sec:integrity:background}


Remote attestation~\cite{costan2016intel, haldar2004semantic} 
allows users to establish trust in a TEE by authenticating the hardware, verifying its state, and checking whether the expected software is running via the cryptographic measurement.
In process-level TEEs (e.g., Intel SGX~\cite{intel-sgx-1}), this measurement refers to the initial code and data loaded into the secure segment of the process' memory that cannot be modified afterwards.
Although integrity protections provided by process-level TEEs satisfy our requirements regarding the service code integrity,
they have significant drawbacks for the sandboxing of potentially malicious code:
Any additional code loaded at runtime runs in the same memory with the same privilege as our service code, potentially allowing it to bypass our privacy barrier.

In contrast, VM-level TEEs (e.g., AMD SEV-SNP~\cite{amd-sev}, Intel TDX~\cite{intel-tdx}) can protect entire VMs, creating CVMs, and offer more flexibility to isolate such code.
A VM-level TEE's attestation report measures the initial memory content,
typically containing \emph{only} the virtual firmware
which then loads the rest of the CVM.
Next, the CVM owner can freely use the CVM (e.g., install packages).
Any such actions, however, are not reflected in the attestation report,
requiring extra mechanisms to protect the integrity of the rest of the CVM~\cite{wilke2024snpguard},
especially in an as-a-service scenario using CVMs,
where the service provider is 
not trusted by service users~\cite{galanou2023trustworthy, akkus2024duet}.


\subsection{CVM Integrity Enhancements}
\label{sec:integrity:cvm}

As described previously, the CVM attestation report only encompasses a cryptographic measurement of the CVM's initial memory content, typically the virtual firmware.
Crucial components such as the kernel, initrd, 
and root file system are excluded from this report. 
Therefore, any malicious alterations to these components after the initial boot (e.g., by a malicious \sys provider)
could compromise the confidentiality and privacy properties that \sys wants to achieve for its end users (i.e., dataset owners, model owners).

To ensure the \emph{complete} remote attestation of the CVM hosting the \sys service container, 
we need to protect the integrity of the CVM at two stages:
First, at the initialization stage, we need to ensure that all CVM components are reflected in the attestation report provided by the VM-level TEE.
Second, at runtime (i.e., after the boot), we need a mechanism to prevent the runtime access to the CVM.

\noindent\textbf{CVM Initialization.}
To extend the attestation report to cover all crucial components,
we utilize \emph{measured direct boot},
where one component loads and verifies the integrity of the next:
The hypervisor uses a pre-configured VM image to launch the CVM.
The cryptographic hashes of the kernel, initrd and kernel command line parameters are embedded into the virtual firmware,
so that these values are reflected in the attestation report generated by the VM-level TEE.
During the boot, the virtual firmware loads the kernel, initrd and kernel command line, and compares their hashes to the expected values.
We only need to protect the service container execution; thus, mount a minimal root file system.

\noindent\textbf{CVM Runtime.}
In general, CVMs are accessible by their respective owners.
In \sys, however, this access puts at risk the confidentiality and privacy of the training session for end users
(i.e., dataset owners, model owner).
As a result, we need to restrict any access to the CVM after its start.
To do so, we utilize an \emph{access policy} that removes all management interfaces (e.g., ssh) and only exposes a small set of RPC endpoints for communication purposes. 
For instance, the policy can set \texttt{ExecProcessRequest} as 
false to disable the \texttt{kubectl exec} commands to login into a container,
or embed information about the container image integrity (\S\ref{sec:integrity:container}).
A default policy along with a policy manager is embedded in the initrd, 
whose integrity is covered by the CVM attestation report.
Using a default policy allows us to build the initrd once and reuse it 
(e.g., for different components with different container images, for updates to container images).
For different training components and container images, we install a different policy at startup.
In \sys, during the guest CVM's startup, the hash of the new policy is reflected in the \texttt{HOSTDATA} field of the attestation report generated by the VM-level TEE (Figure~\ref{fig:service_integrity}).
Using this information, the policy agent checks and allows only the matching policy to be installed.
In addition, using the attestation report, the end users can check the integrity of  the installed policies.

\begin{figure}[t]
	\centering
    \includegraphics[clip=true, trim={185 120 95 170}, width=0.79\linewidth]{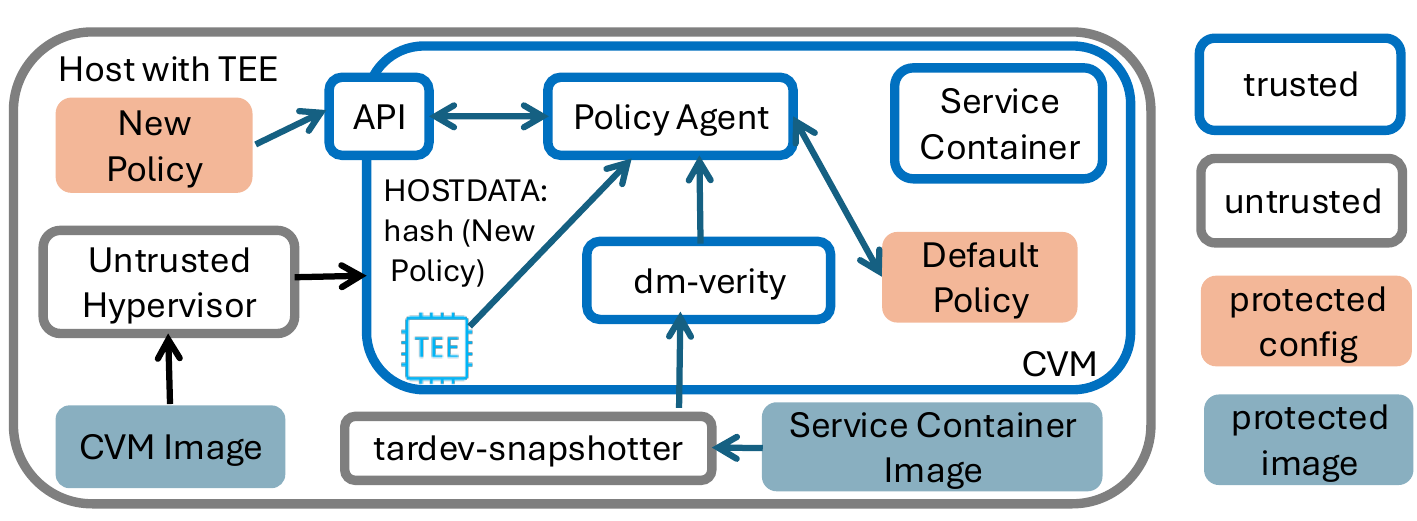}
    \vspace{-.05in}
	\caption{\sys service integrity mechanisms.}
	\label{fig:service_integrity}
\end{figure}

\subsection{Service Container Image Integrity}
\label{sec:integrity:container}
Protecting the guest CVM which runs the \sys service container is necessary but not sufficient.
In CoCo, the containerd's \texttt{snapshotter} plugin~\cite{snapshotter} downloads and manages container images on the host,
and the appropriate container image is mounted to the guest CVM at its start via shared file systems.
This approach allows an untrusted host to manipulate the service container image.
A trivial solution is to store and manage the container image directly inside the guest CVM.
This solution, however, incurs a large memory footprint and long startup time due to each CVM needing to download the image from a remote registry at every boot.

In \sys, we take an alternative approach (Figure~\ref{fig:service_integrity}).
We utilize \texttt{dm-verity}~\cite{dm-verity},
a kernel feature providing transparent integrity checking of block devices, and
ensuring that the data read from a block device matches the precomputed hash tree.
The container image layers are tarball files,
which we convert via the \texttt{tardev-snapshotter} plugin~\cite{msftcctardev, msftkata} to block devices that \texttt{dm-verity} can use.
During this conversion, the plugin calculates the hash tree of each image layer and appends it to the block device,
with the root hash of the tree embedded into the mount metadata.
\texttt{dm-verity} and the policy agent in the guest CVM then ensure that only block devices with the correct root hash values are mounted via checking expected values from the installed policy
that binds the container image integrity to the attestation report
(\S\ref{sec:integrity:cvm}).

\section{Formal Privacy Analysis}
\label{sec:security_analysis_formal_dp}

\sys provides data privacy during and after training via the privacy barrier (\S\ref{sec:design-dp}) that leverages and extends the privacy mechanisms of DP-SGD~\cite{abadi2016deep}.
Accounting for DP guarantees in DP-SGD can be done via the methods described in~\cite{koskela2020, gopi2021}.
Specifically, the scheme can be seen as an adaptive composition of DP mechanisms (i.e., Gaussian mechanisms and subsamplied Gaussian mechanisms). 
Using the results of~\cite{zhu2022optimal,gopi2021}, this composition can be evaluated using \textit{privacy loss random variables} (PLRVs) determined by the individual mechanisms.
The proofs for DP guarantees and details of PLRVs can be found in Appendices~\ref{sec:dp_accounting} and \ref{sec:dp_accounting2}.

We design three additional differentially private mechanisms for \sys: 1) DP masking, 2) dynamic gradient clipping, and 3) DP noise correction. 
Privacy leakage of each mechanism is assessed by how much the mechanism’s output changes with a single data item, and this guides the computation of the DP privacy parameters $(\epsilon, \delta)$.



\noindent\textbf{DP masking (\S\ref{sec:arch:dp_masks}).}
DP masking does not change the final model's DP guarantees, as the privacy leakage and the noise amount are unchanged from DP-SGD.
It simply extends those guarantees even when the model owner and up to $n-1$ data owners collude (\S\ref{sec:data_concerns}).

\noindent\textbf{Dynamic Gradient Clipping (\S\ref{sec:arch:dynamic_clipping}).}
The histogram aggregation for dynamic gradient clipping is a Gaussian mechanism.
During aggregation, changing a single data item changes the count of each of the two histogram bins at most by 1, meaning the $L_2$-sensitivity is at most $\sqrt{2}$.
This sensitivity and noise scale $\sigma_g$ are sufficient to compute the DP guarantees of the gradient clipping mechanism,
which are combined with the DP-SGD guarantees for the final model.




\noindent\textbf{DP Noise Correction (\S\ref{sec:arch:noise_correction}).}
DP guarantees of the noise correction are also those of the Gaussian mechanism for a suitably chosen noise scale.
As such, the analysis of the DP-SGD iterations simplifies, because it now corresponds to the analysis of a sequence of Gaussian mechanisms.
The proof and results can be found in Appendices ~\ref{sec:app:errcorr} and ~\ref{sec:matrix_mechanism}.

\section{System Security Analysis}
\label{sec:security_analysis_system}

\sys addresses threats in collaborative ML training 
with its usage of VM-level TEEs, as well as its privacy barrier (\S\ref{sec:design-dp}), sandboxing (\S\ref{sec:sandbox}) and integrity (\S\ref{sec:integrity}) mechanisms.
Here, we analyze how these mechanisms address the concerns of 
dataset owners and model owners.


\subsection{Common Concerns}
An attacker may want to leak the datasets, and may try to extract the model structure and parameters as well as to learn the training code.
The common concern of dataset owners and model owners is the confidentiality of their assets (i.e., datasets, model and training code).
In \sys, dataset owners and model owners upload their assets after encryption (\S\ref{sec:design:workflow}).
After attesting the key distribution service (KDS), they share their decryption keys with the KDS which keeps the keys inside its protected memory.
The KDS only shares these keys with the training components after attesting that they run the intended service code with the intended training session configuration, ensuring the datasets, model and training code are only accessed and decrypted by the sandboxed and integrity-protected \sys components.
These \sys components operate the TEE memory and encrypted disks, 
shielding the assets from the \sys provider and cloud provider or other malicious actors.


\subsection{Other Concerns of Dataset Owners}
\label{sec:data_concerns}

Besides dataset confidentiality, dataset owners also care about the privacy of individuals whose data is in the datasets.

\noindent\textbf{Data Reconstruction and Membership Inference.}
An attacker may observe model updates to infer sensitive training data during training,
try to determine whether a specific data item was used in the training, or reverse engineer the model to reconstruct training data.
Our privacy barrier adds random DP masks into gradients, which bounds the influence of any single data item
and ensures individual data impact remains obscured during and after training.
Besides, our sandboxing mechanism enforces the data exchanged between data handling and model updating components flows only through the privacy
barrier, such that data privacy is guaranteed even if the model and training code are maliciously designed.


\noindent\textbf{Training Session Manipulation.} An attacker may try to leak private data by extending a training session or run multiple training sessions with the same datasets.  In \sys, as described in \S\ref{sec:design:workflow}, the former is prevented by the admin component which enforces the DP privacy budget and stops the training once the budget has been spent; the latter is prevented by the key distribution service (KDS) which discards the datasets' decryption keys once shared, so that multiple training sessions with the same datasets cannot be run without involving the dataset owners.


\noindent\textbf{Collusion Between Participants.} 
Some dataset owners and model owner may collude to infer others' data. 
The DP-masking in our privacy barrier ensures that each individual gradient appears purely random.  To get any meaningful result, an attacker has to aggregate all DP-masked gradients, but then the aggregate will be DP-noisy.  That means, even if the model owner colludes with up to $n-1$ dataset owners, the non-colluding dataset owners are still protected with the full DP noise and remain private.
Noise correction further prevents attackers from correlating noise across iterations.

\if 0
the below has been described in the training session manipulation paragraph.
\noindent\textbf{Privacy Accounting.}
As stated in \S\ref{sec:security_analysis_formal_dp}, the full DP accounting can be computed from the DP-SGD iterations and our dynamic gradient clipping mechanism. All the information is held by the admin component, which enforces the privacy budget and stops training once it has been spent.
The additional DP mechanism required for correct accounting (gradient clipping and noisy aggregation) are enforced by the system code and TEE attestation.
KDS also prevents malicious actors from reusing datasets in unauthorized sessions.
\fi

\begin{figure*}[t]
    \centering
    \subfloat[MNIST-MLP3\label{fig:dp-convergence-iter}]{
        \includegraphics[width=0.66\columnwidth]{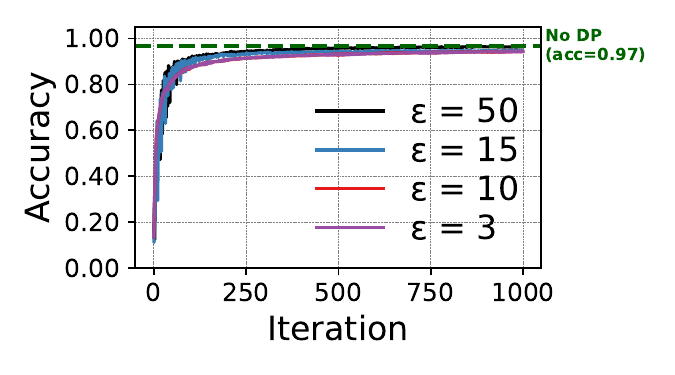}
        \label{fig:dp-mnist-convergence-iter}
    }\hfill
    \subfloat[CIFAR10-CNN6\label{fig:dp-convergence-iter2}]{
        \includegraphics[width=0.66\columnwidth]{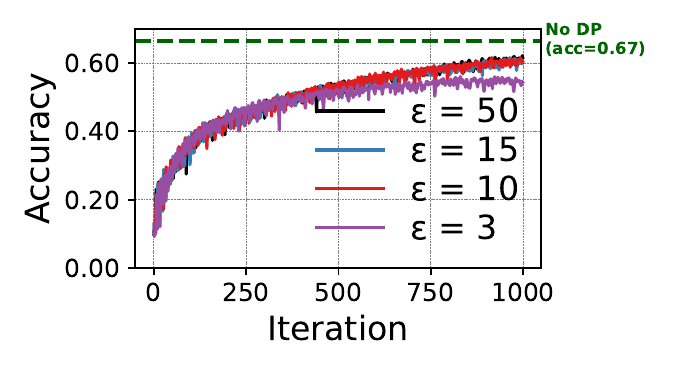}
       \label{fig:dp-cifar10-convergence-iter}
    }\hfill
    \subfloat[AGNEWS-Roberta-base\label{fig:dp-convergence-iter3}]{
        \includegraphics[width=0.66\columnwidth]{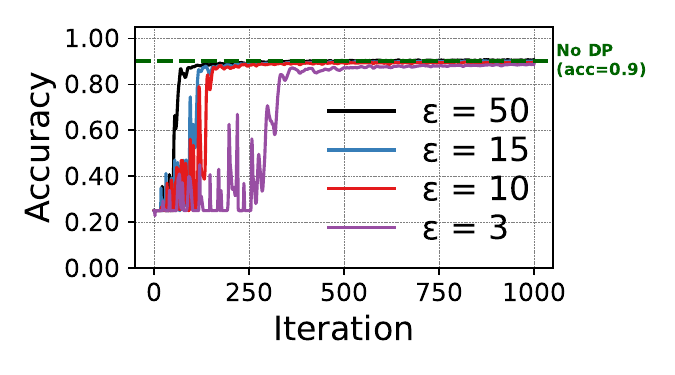}
        \label{fig:dp-roberta-convergence-iter}
    }
     \vspace{-.05in}
    \caption{Model accuracy and convergence over time, compared with the non-private baselines.}
    \label{fig:dp-utility-convergence}
\end{figure*}

\noindent\textbf{Other Attacks.}
Our sandboxing mechanism enforces appropriate access
control (e.g., no network or shared resources, isolated file
systems, etc.) and strictly regulates the data exchanged between data handling and
model updating components flows only through the privacy
barrier.
Our integrity mechanism further ensures the privacy barrier and sandboxing mechanisms are integrity-protected and executed as expected.
Although these defenses raise the bar significantly for covert-, side-channel and physical attacks, \sys does not completely address them, as stated in our threat model (\S\ref{sec:assumptions}).
We note, however, that \sys' modular design allows the integration of other defenses (e.g., oblivious schemes to defeat the attacks based on timing and memory access patterns~\cite{hynes2018efficient, vanoverloop2025tlblur, hunt2020telekine, ohrimenko2016oblivious, shih2017t}).



\subsection{Other Concerns of Model Owners}
Besides the confidentiality of the model and training code, a model owner also cares about the usefulness of the training.

\noindent\textbf{Model Poisoning.}
An attacker may try to inject malicious data or updates to poison the model. 
The trustworthiness and usefulness of datasets can be efficiently checked via TEEs without violating confidentiality~\cite{akkus2024praas}, ensuring only authorized datasets are used in training. 
Dataset decryption keys are bound to the attested TEE components with the agreed training session configuration by the KDS (also protected by a TEE), so that they cannot be altered after uploading.

\section{Implementation}
\label{sec:implementation}


We implemented the full \sys system.
The training components (i.e., admin, data handling, model updating) are implemented in 7.5K lines of Python code.
Each component is containerized, 
so that the confidential training session can be deployed 
in Kubernetes.
For service integrity, we used several open-source components:
AMD~\cite{amd-sev} provides the virtual firmware, guest kernel, host kernel and Qemu hypervisor.
CoCo~\cite{coco} provides the container runtime support for VM-level TEEs.
We integrated Microsoft's \texttt{tardev-snapshotter} plugin and policy agent~\cite{msftcctardev, msftkata} to CoCo to enhance the CVM and container image integrity.

The management and key distribution service (KDS) consist of 2.3K and 1K lines of Python code, respectively.
Dataset and model owners can use the Python SDK and GUI
to encrypt and upload assets to untrusted cloud storage,
upload keys to KDS after attestation, and manage a training session.




\if 0

\begin{figure}[t]
    \centering
    
    \subfloat[Norm of DP noise\label{fig:dp-mnist-dynclip-norm}]{
        \includegraphics[width=0.48\columnwidth]{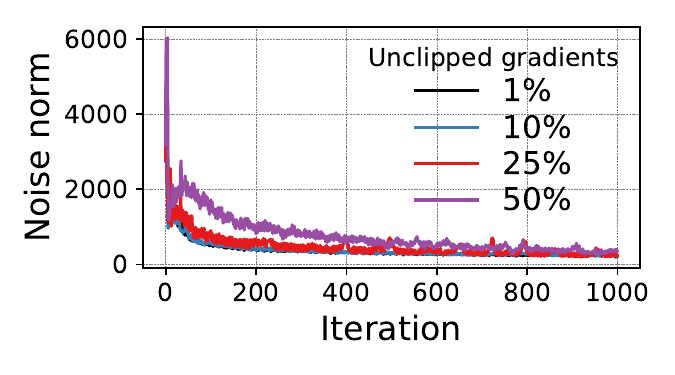}
    }\hfill
    \subfloat[Model convergence\label{fig:dp-mnist-dynclip-convergence}]{
        \includegraphics[width=0.48\columnwidth]{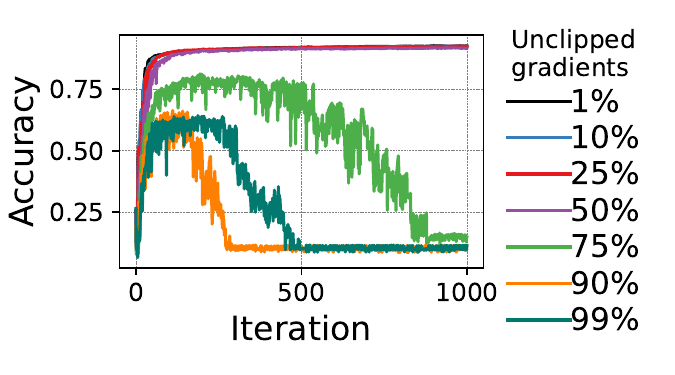}
    }
      \caption{Training of MNIST-MLP3 with dynamic clipping.}
      \label{fig:dp-mnist-dynclip}
\end{figure}
\fi

\section{Evaluation}
\label{sec:evaluation}
We evaluated the performance of \sys along two axes. 
In
\S\ref{sec:evaluation-dp}, we report the effects of our privacy barrier mechanism
on the ML training
(i.e., accuracy, convergence, privacy).
In \S\ref{sec:evaluation-sys}, we report the performance of our sandboxing and integrity mechanisms,
as well as compare \sys with the state-of-the-art privacy-preserving training systems.


For our evaluation, we first use 
a 3-layer MLP model with MNIST dataset (MNIST-MLP3) and a 6-layer CNN model with CIFAR10 dataset (CIFAR10-CNN6), which are widely used to evaluate recent systems for privacy-preserving ML training~\cite{mo2021ppfl, mo2020darknetz, liu2024pencil, patra2020blaze}.
It is known that using DP-SGD (today's \emph{de facto} standard which our privacy barrier is extended upon) for larger models \emph{without pre-training} is undesirable regarding accuracy~\cite{dormann2021not, klause2022differentially, kurakin2022toward}.  Note that, this constraint is due to DP-SGD and is orthogonal to our privacy barrier design; \sys can benefit from any advancements to DP-SGD accuracy.
Nevertheless, to also showcase \sys's applicability to larger models, instead of doing a full training from scratch, we use a pre-trained 12-layer transformer model, Roberta-base~\cite{roberta-base}, and \emph{fine-tune} it with AG-News dataset~\cite{agnews} (AGNEWS-Roberta-base) with DP guarantees.
More details of our evaluation setup can be found in Appendix~\ref{sec:evaluation-setup}.

\subsection{Effects of Privacy Barrier}
\label{sec:evaluation-dp}



\sys is designed to ensure that the privacy guarantee of its DP-masking mechanism (\S\ref{sec:arch:dp_masks}) is the same as the centralized  DP-SGD (\S\ref{sec:dpsgd}).
Key differences in \sys are: 1) it is more attack resistent (\S\ref{sec:data_concerns}), 2) it has enhancements such as dynamic gradient clipping (\S\ref{sec:arch:dynamic_clipping}) and DP noise correction (\S\ref{sec:arch:noise_correction}), and 3) 
all privacy mechanisms are enforced by our sandboxing and integrity mechanisms.
\noindent\textbf{Model Accuracy \& Convergence.}
Privacy levels defined by the $(\epsilon, \delta)$ value pair affect the model accuracy, with larger $\epsilon$ values representing lower privacy guarantees, and vice versa.
To show the trade-off between accuracy and privacy,
we train all three models
for 1000 iterations with different privacy budgets to compare how models converge at different privacy levels.
As expected, Figure~\ref{fig:dp-utility-convergence} shows that models tend to reach higher accuracy with larger $\epsilon$ values.
In addition, all models have reached or are approaching the accuracy of non-private baselines,
showing the effectiveness of our privacy mechanisms.



%


Figure~\ref{fig:dp-utility-convergence}
also shows the model convergence.
Here, the noise is adapted to always ensure that the full privacy budget is spent exactly after 1000 iterations.
Note that, these plots show the \emph{complete} training:
even 
if some curves indicate the model accuracy could still be improved,
the admin component in \sys ensures that no further training is allowed because the full privacy budget is spent (\S\ref{sec:design:workflow}).
With a fixed privacy budget, a longer training time (i.e., more iterations) means less privacy budget spent (i.e., more noise) per iteration to maintain the DP guarantee.


\noindent\textbf{Dynamic Gradient Clipping.}
%
To show the effect of dynamic gradient clipping (\S\ref{sec:arch:dynamic_clipping}), we trained the MNIST-MLP3 model 
with dynamic clipping and the initial clipping bound of 2.0.
Figure~\ref{fig:dp-mnist-dynclip} shows that a clipping bound attempting to dynamically keep 75\% or more of the gradients unclipped requires increased noise 
that actually prevents model convergence, highlighting a trade-off:
A larger clipping bound preserves more gradient information but requires more noise to maintain the same DP level, while a smaller clipping bound requires more clipping but adds less noise.
Figure~\ref{fig:sys-dp-overhead} shows the latency overhead of dynamic gradient clipping, which is negligible compared with operations like `data handling'.


\begin{figure}[t]
    \centering
    \includegraphics[width=0.75\columnwidth]{figures/dp-eval/mnist_mlp3_dynclip_convergence.pdf}
    \caption{MNIST-MLP3 with dynamic gradient clipping.}
    \label{fig:dp-mnist-dynclip}
\end{figure}

\begin{figure}[t]
    \centering
    \subfloat{
        \centering
        \includegraphics[width=0.9\columnwidth]{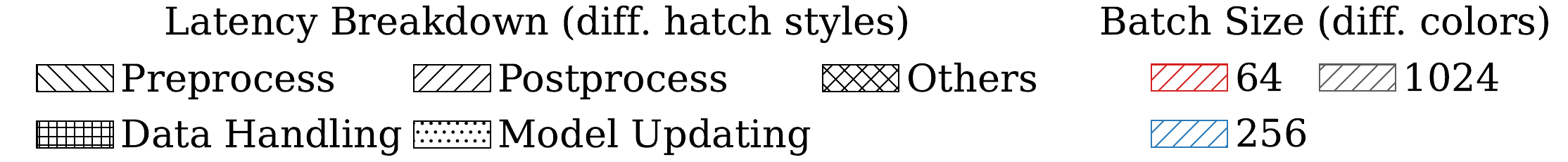}
        \label{fig:breakdown_mask_legend}
    }
    \vspace{-0.5mm}
    \setcounter{subfigure}{0}
    \subfloat[MNIST-MLP3]{
        \includegraphics[width=0.43\columnwidth]{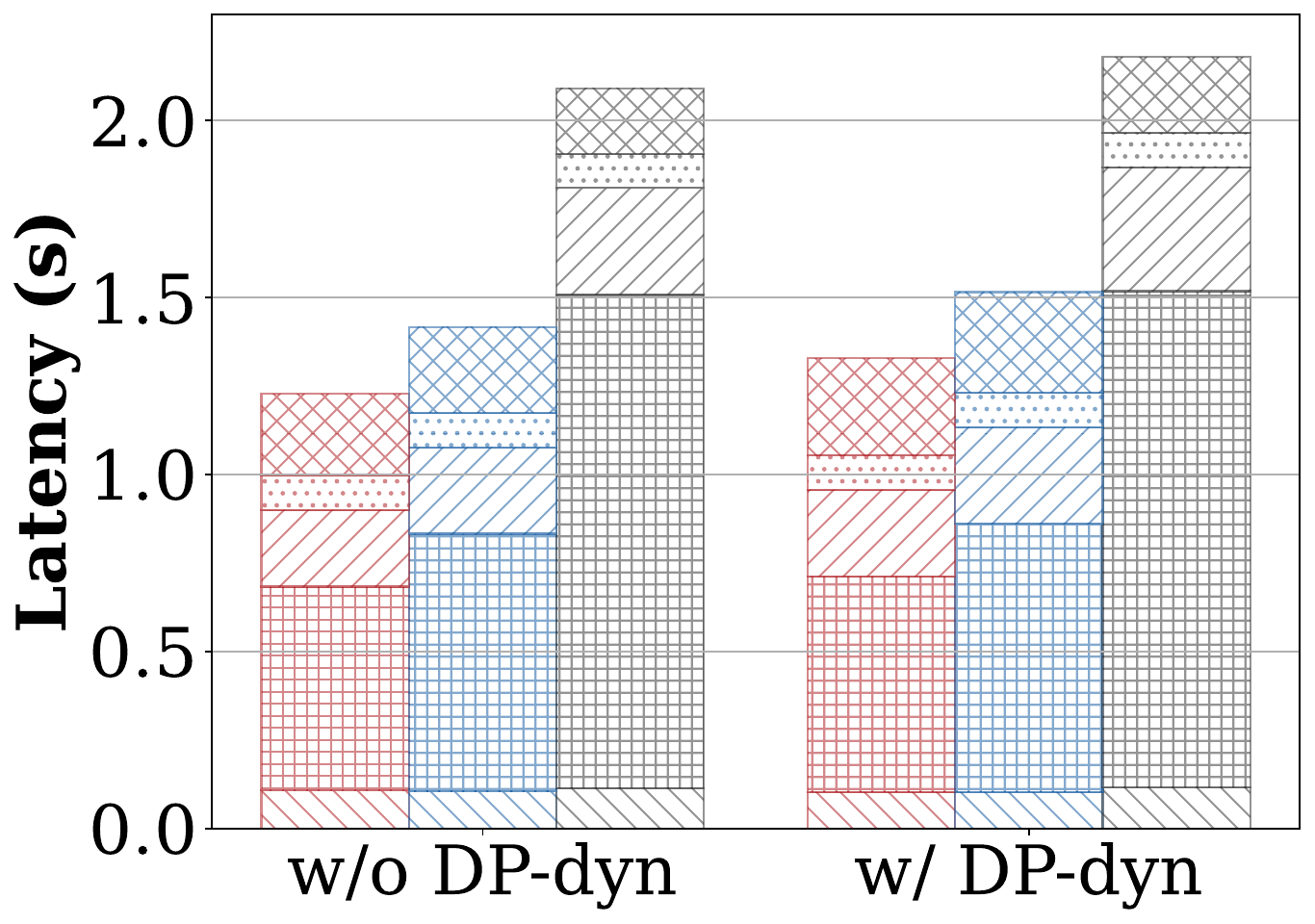}
        \label{fig:sys-dp-overhead-mlp3}
    }\hfil
    \subfloat[CIFAR10-CNN6]{
        \includegraphics[width=0.43\columnwidth]{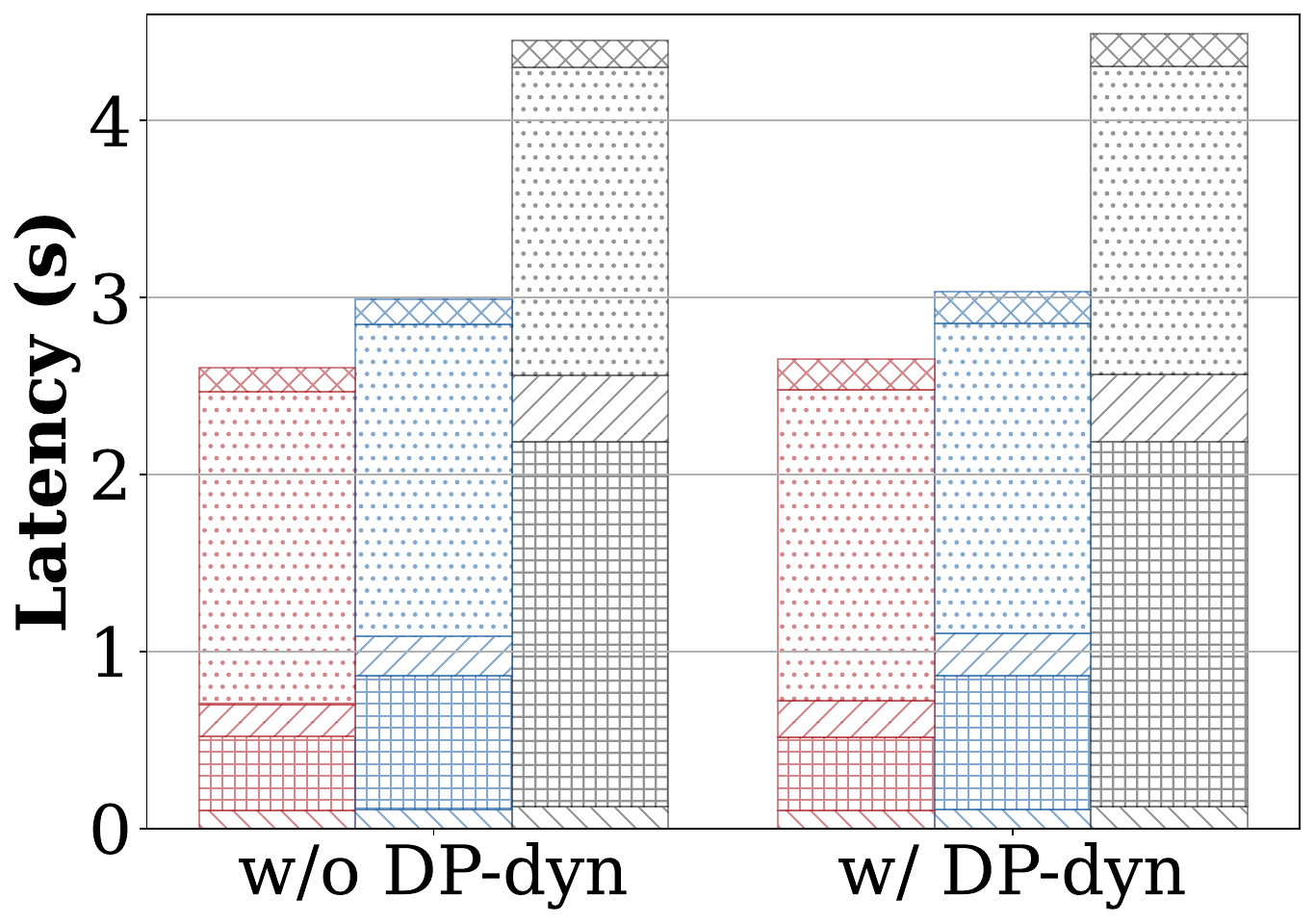}
        \label{fig:sys-dp-overhead-cnn6}
    }
     \vspace{-.05in}
    \caption{Latency per iteration with and without DP dynamic gradient clipping (DP-dyn).}
    \label{fig:sys-dp-overhead}
\end{figure}

\noindent\textbf{DP Noise Correction.}
To validate our DP noise correction mechanism (\S\ref{sec:arch:noise_correction}),
we trained a feed-forward model
with MNIST dataset using $\epsilon$ values corresponding to noise levels $\sigma$ of 20, 40 and 80, respectively, and the noise correction coefficient $\lambda = 0.7$.  
As expected, Figure~\ref{fig:mnist_errcorr} shows that the model performs similar to DP gradient descent without noise correction, but with stronger privacy protections for individual gradients during training (\S\ref{sec:arch:noise_correction} and Appendix \S\ref{sec:app:errcorr}).

\subsection{System Evaluation}
\label{sec:evaluation-sys}

In this section, we evaluate the overheads of sandboxing and integrity mechanisms in \sys.
We then compare \sys with the state-of-the-art systems.
Finally, we show how GPU TEE acceleration helps improve the performance.

\noindent\textbf{Execution Environment Overhead.}
Figure~\ref{fig:sys-exeenv-overhead} shows training latency in different execution environments, with and without integrity protection.
CoCo with `tardev-snapshotter' (i.e., CCT-NS and CCT-SB) is $2.9-32.0\%$ faster than Kata~\cite{katacontainer} (i.e., QVM-NS and QVM-SB),
because of its pre-allocated memory resources and the VM I/O-friendly block device snapshotter implementation.


\begin{figure}[t]
    \centering
    \includegraphics[width=\linewidth]{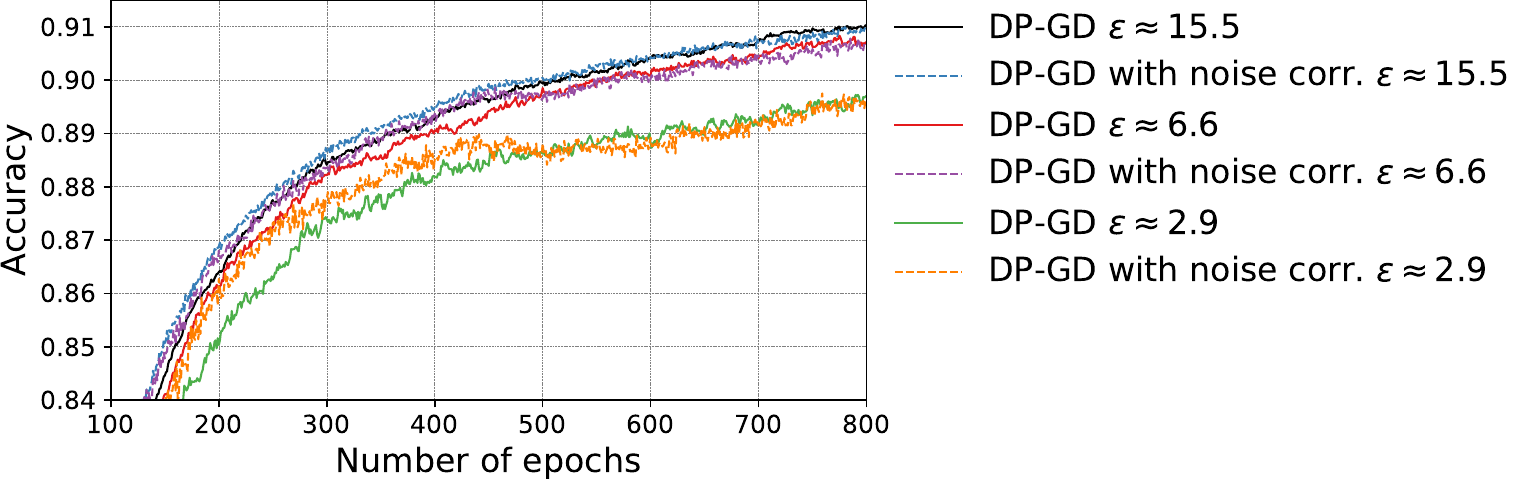}
    \caption{MNIST classification problem and a feed-forward neural network, trained using DP Gradient Descent (DP-GD) with and without DP Noise Correction, $\lambda=0.7$.}
	\label{fig:mnist_errcorr}
\end{figure}

\begin{figure*}[ht]
    \centering
    \subfloat{
        \centering
        \includegraphics[width=\columnwidth]{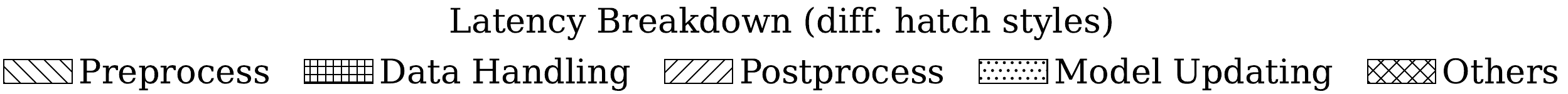}
        \label{fig:breakdown_mask_legend_1}
    }
    \vspace{-0.4mm}
    \setcounter{subfigure}{0}
    \subfloat[MNIST-MLP3]{
        \includegraphics[width=0.32\textwidth]{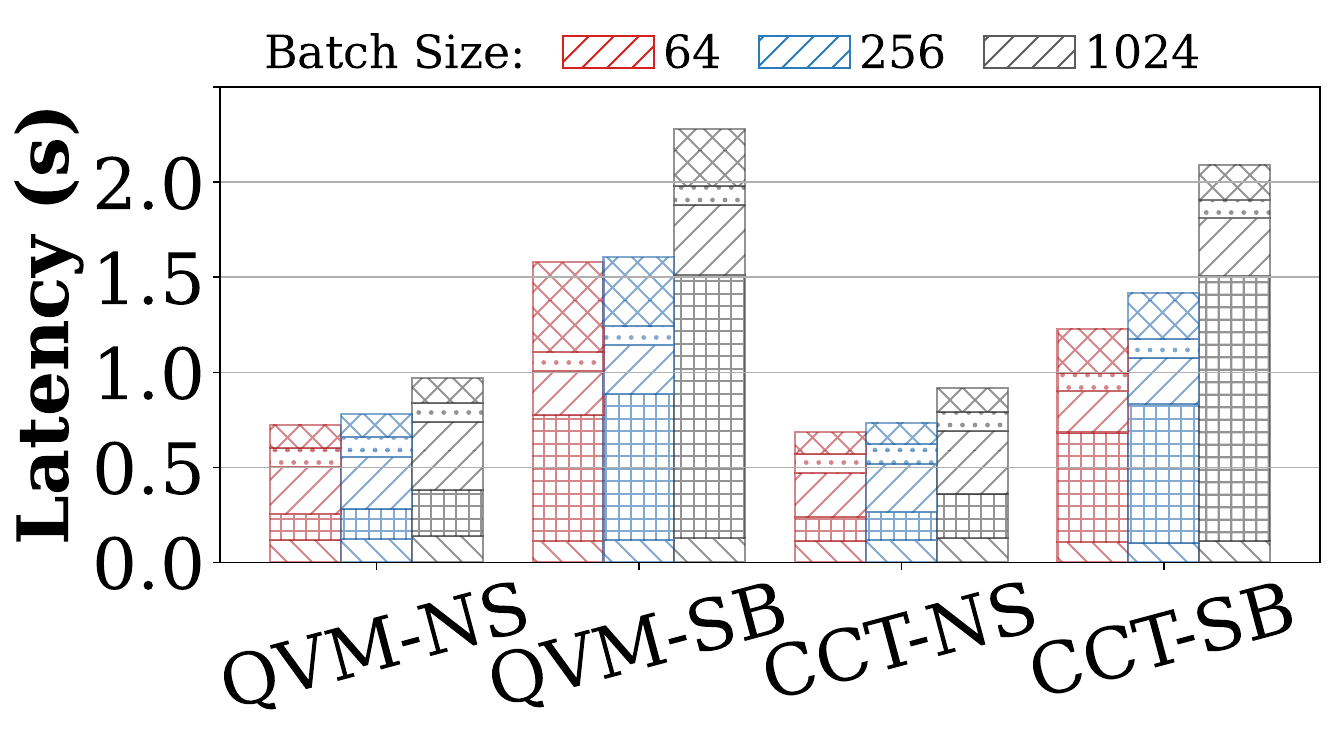}
        \label{fig:sys-exeenv-overhead:subim1}
    }
    \subfloat[CIFAR10-CNN6]{
        \includegraphics[width=0.32\textwidth]{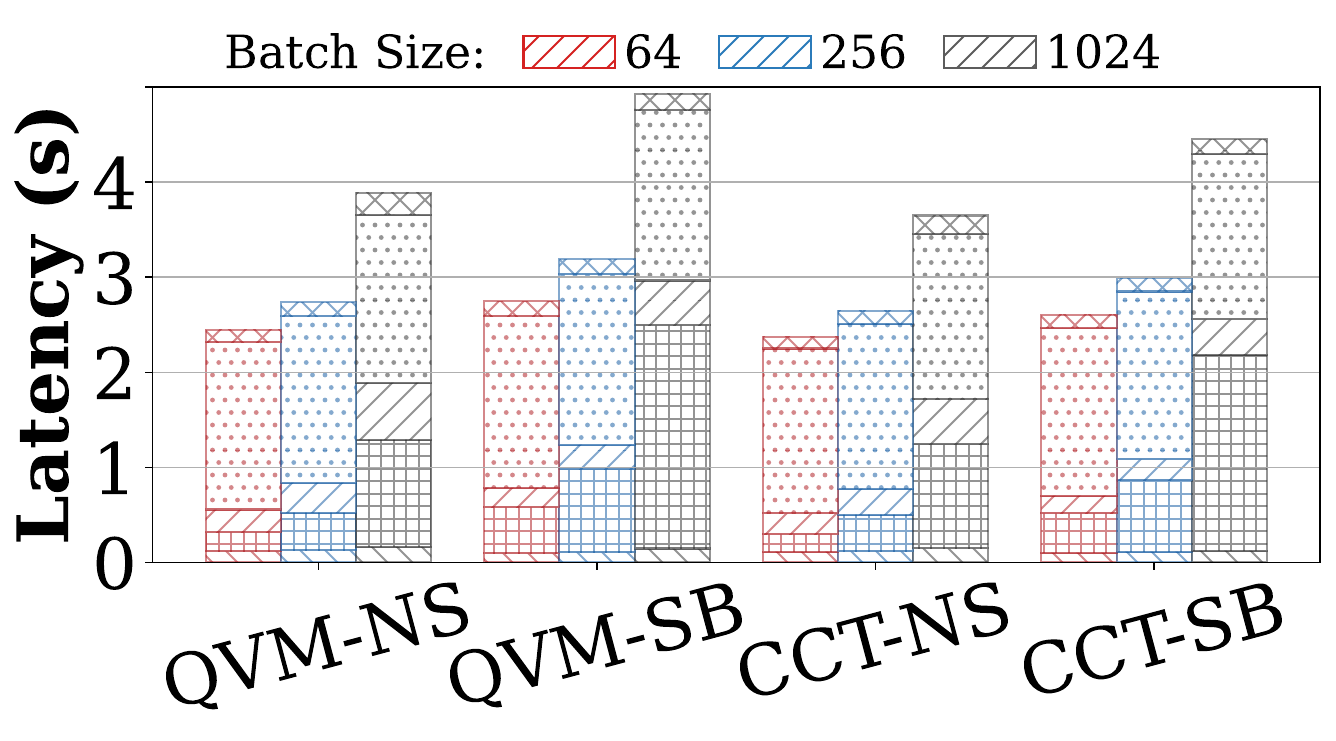}
        \label{fig:sys-exeenv-overhead:subim2}
    }
    \subfloat[AGNEWS-Roberta-base]{
        \includegraphics[width=0.32\textwidth]{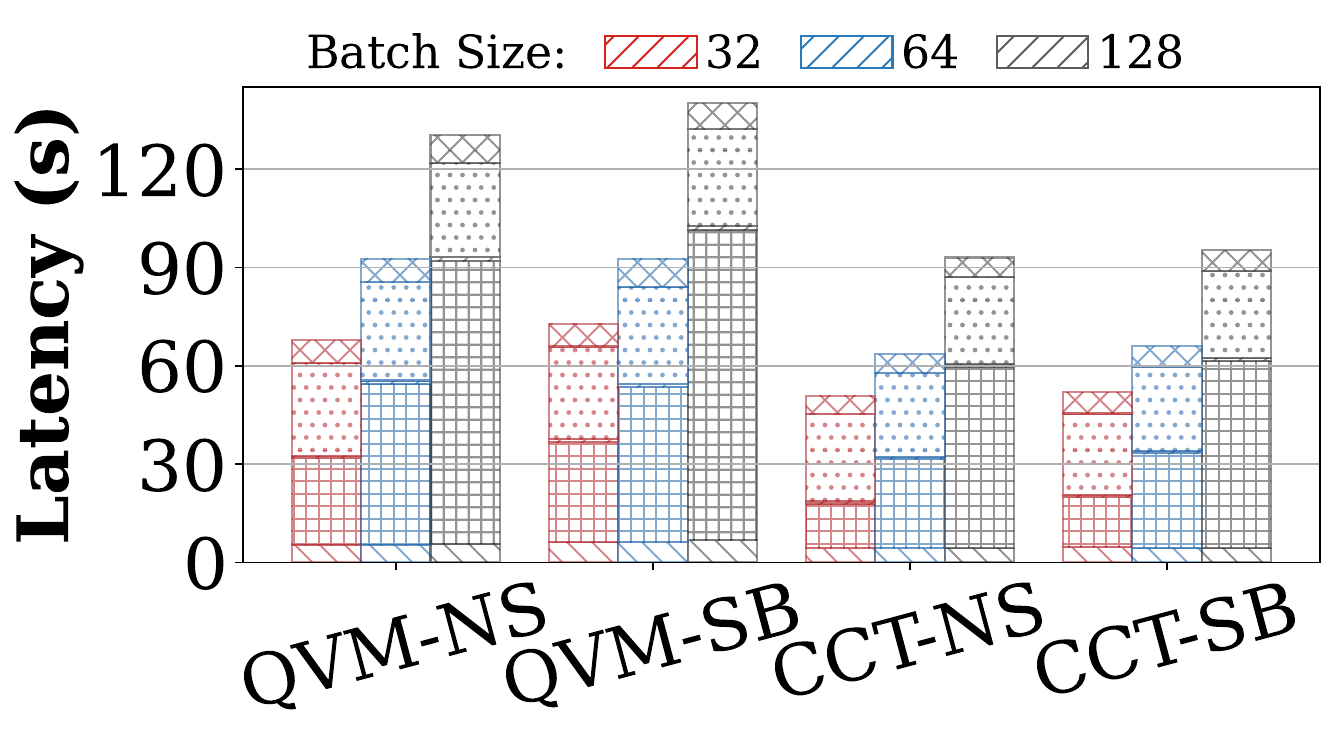}
        \label{fig:sys-exeenv-overhead:subim3}
    }
    \caption{Latency per iteration for different execution environments. QVM: Kata containers~\cite{katacontainer}, CCT: CoCo with \texttt{tardev-snapshotter}. NS: no sandbox, SB: with sandbox. All evaluations use 12 vCPUs.
    }
    \label{fig:sys-exeenv-overhead}
\end{figure*}

\begin{figure*}[ht]
    \centering
    \subfloat[MNIST-MLP3]{
        \includegraphics[clip=true, trim=0 0 0 0, width=0.325\textwidth]{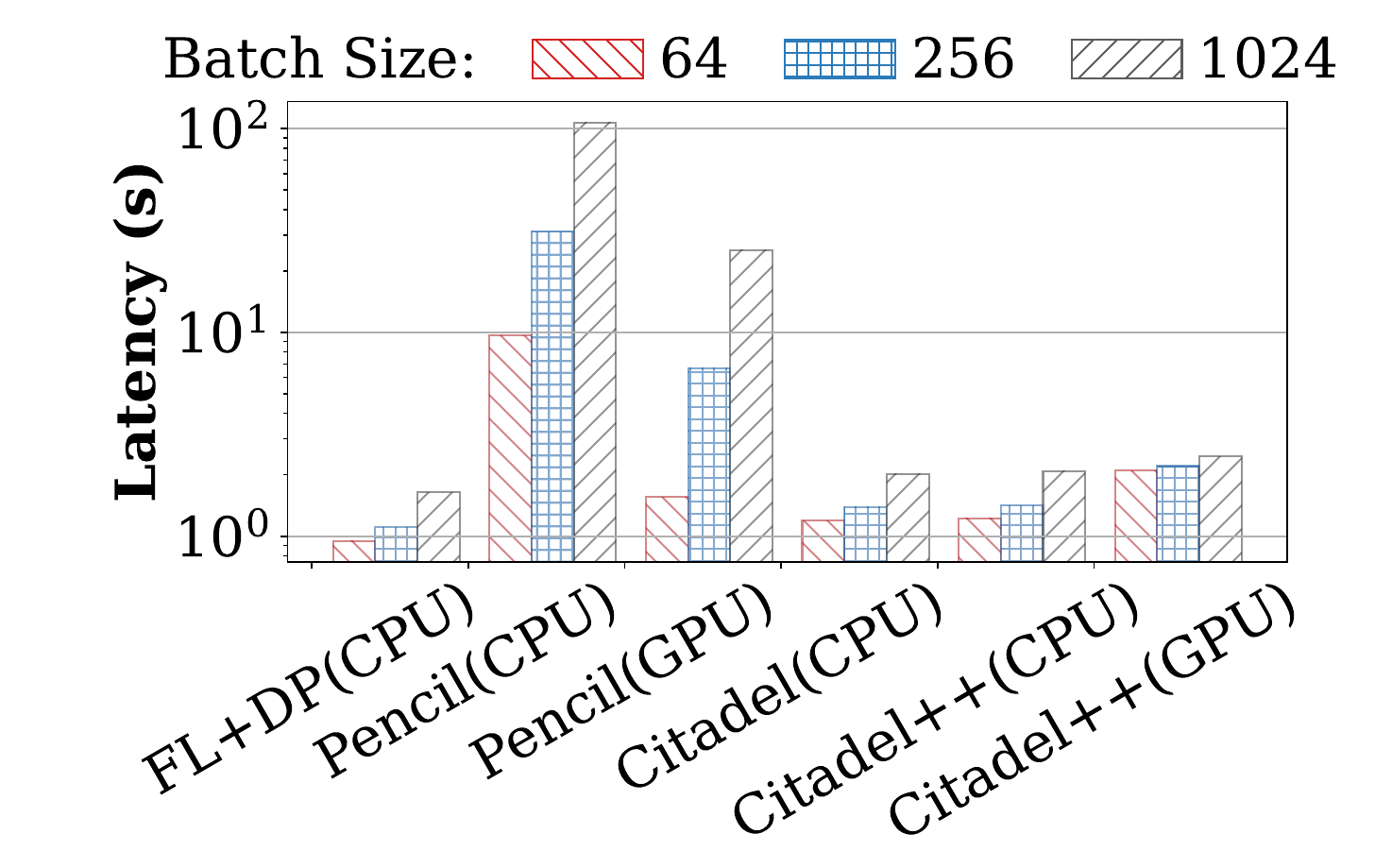}
        \label{fig:eval-sys-baseline-mlp3}
    }
    \subfloat[CIFAR10-CNN6]{
        \includegraphics[clip=true, trim=0 0 0 0, width=0.325\textwidth]{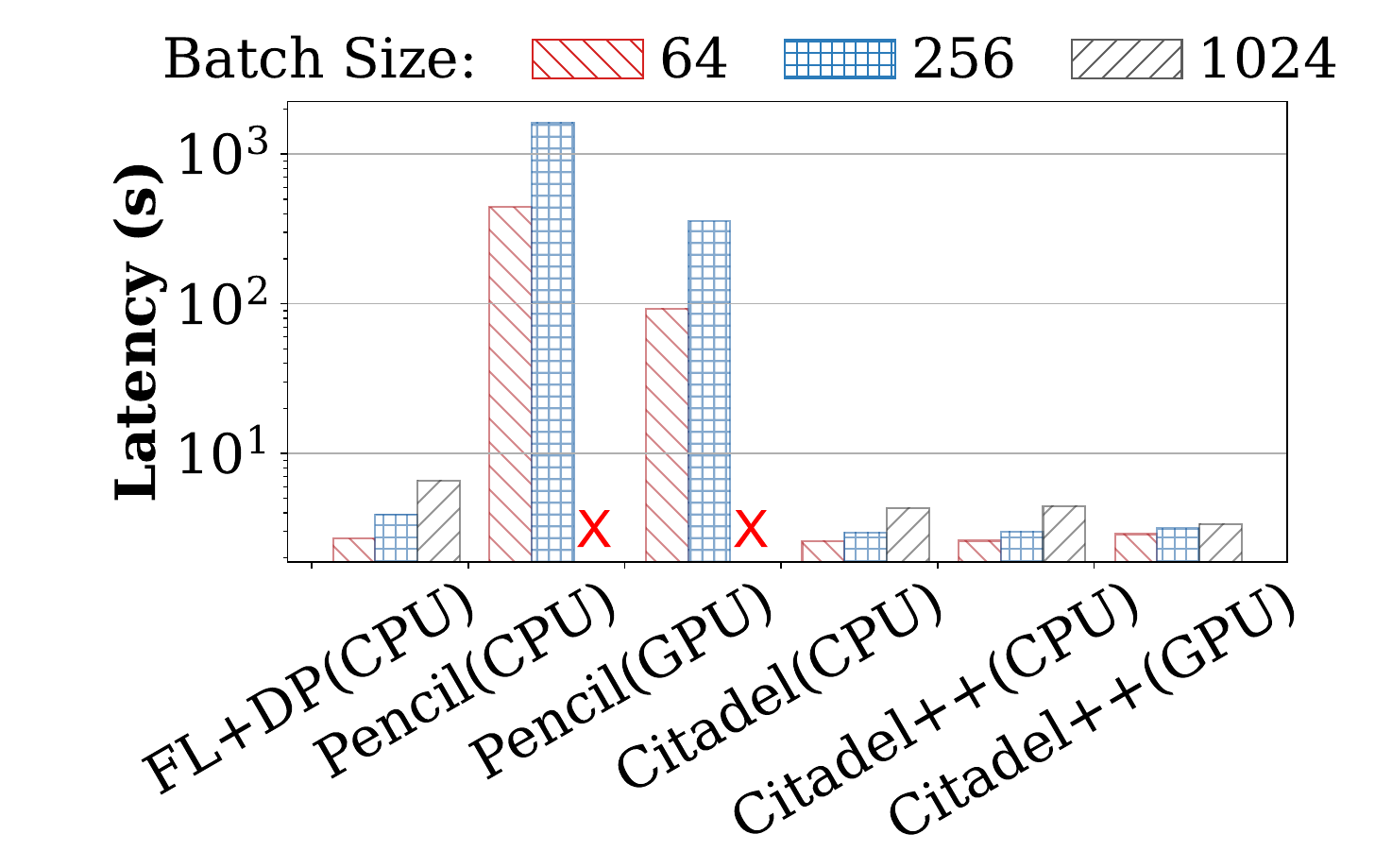}
        \label{fig:eval-sys-baseline-cnn6}
    }
    \subfloat[AGNEWS-Roberta-base]{
        \includegraphics[clip=true, trim=0 0 0 0, width=0.325\textwidth]{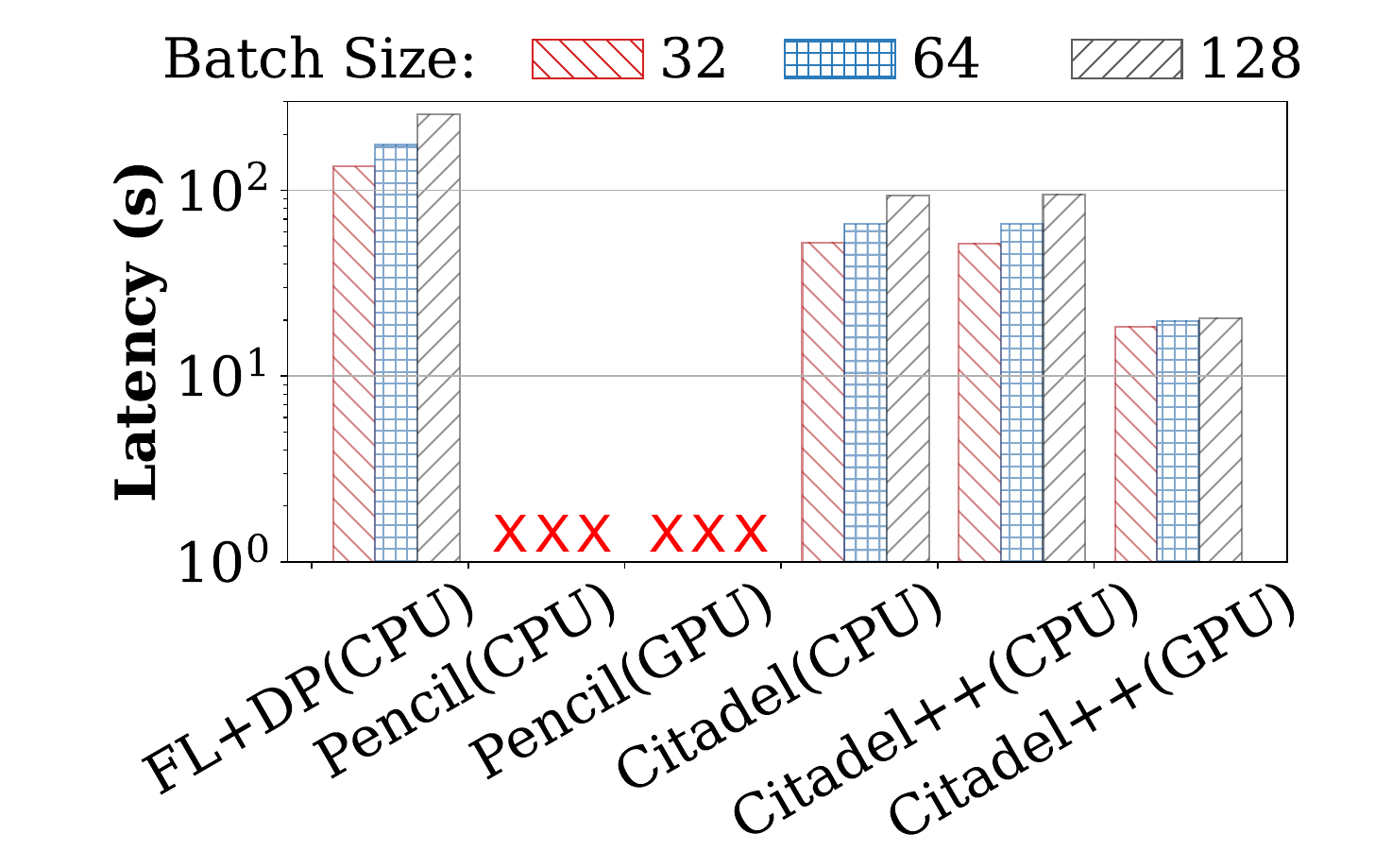}
        \label{fig:eval-sys-baseline-wrn28}
    }
   \caption{Latency per iteration for \sys and the state-of-the-art systems. Missing bars for Pencil are due to out-of-memory errors (CIFAR10-CNN6 with the batch size of 1024) or unsupported model layers (attention layers in AGNEWS-Roberta-base).
   All CPU-based evaluations use 12 vCPUs. Citadel++ (GPU) uses 12 vCPUs and 1 H100 GPU in TEE mode.
   }
   \label{fig:eval-sys-baseline}
\end{figure*}

\noindent\textbf{Sandboxing Overhead.}
Inside a sandbox, the overhead comes from three sources:
1) initializing the current iteration's model (i.e., de/serialization), 2) computing gradients,
and 3) returning the gradients to the parent process (i.e., over local network).
Sandboxing corresponds to $1.9-56.2\%$ of the end-to-end overhead (Figure~\ref{fig:sys-exeenv-overhead}, CCT-SB vs. CCT-NS),
depending on the model and batch size.

The relative overhead of sandboxing is smaller with larger models,
because more time is spent on computing gradients.
For instance, with a batch size of 64, this overhead in MNIST-MLP3, CIFAR10-CNN6 and AGNEWS-Roberta-base is 44.3\%, 8.8\% and 3.6\%, respectively.
In addition, 
larger batches produce more gradients, causing more overhead due to de/serialization and communication.
For instance, in CIFAR10-CNN6, batch sizes of 64, 256 and 1024 lead to 8.8\%, 11.5\% and 18.0\% sandboxing overhead, respectively.

\noindent\textbf{Comparison with State-of-the-Art.} We compare \sys with the state-of-the-art privacy-preserving ML training systems:
1) \textbf{FL with DP}~\cite{abadi2016deep}, 2) Crypto-based \textbf{Pencil}~\cite{liu2024pencil}, and 3) TEE-based \textbf{Citadel}~\cite{zhang2021citadel}.
For Pencil, we use both CPU and GPU.
For Citadel, we adapt its TEE from Intel SGX to AMD SEV-SNP for a fair comparison.
Note that, FL with DP has no confidentiality and integrity protections.
Citadel cannot protect against collusion between dataset owners and model owners.
Furthermore, all these systems cannot guarantee data privacy if the model or training code is maliciously designed; therefore, to gain dataset owners' trust, they require the model and its training code to be inspectable (i.e., no confidentiality).
Nonetheless, these systems make good baselines for performance comparison.

Figure~\ref{fig:eval-sys-baseline} shows the latency per iteration for training the three models.
\sys (CPU) demonstrates comparable performance to FL with DP and Citadel (CPU), but with stronger confidentiality and privacy properties.
\sys (CPU) achieves 7-543$\times$ speedups over Pencil (CPU).
Pencil (GPU) suffers from scaling problems and unsupported model layers for larger models and batch sizes, 
while \sys (GPU) outperforms Pencil (GPU) up to 113$\times$ in other cases.
The only exception is for the small MNIST-MLP3 with a small batch size of 64, where \sys (GPU) is slightly slower.

\begin{figure}[t]
    \centering
    \includegraphics[width=\linewidth]{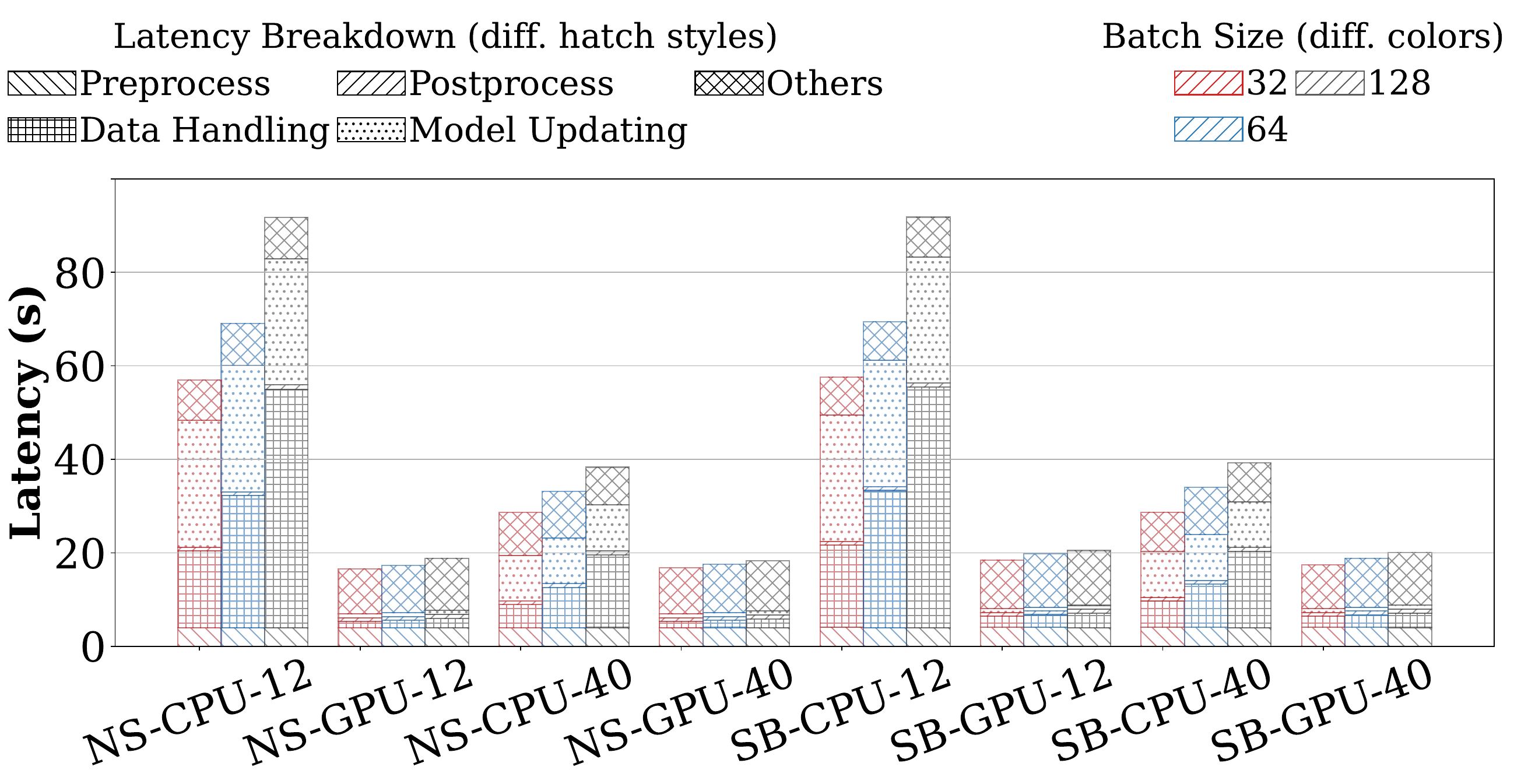}
    \caption{Latency per iteration for AGNEWS-Roberta-base with CPU only and GPU TEE (H100 in TEE mode).
    NS: no sandbox, SB: with sandbox.
    12/40: number of vCPUs.
    }
    \label{fig:sys-eval-gpu-tee}
\end{figure}

\noindent\textbf{GPU TEE Acceleration.}
Figure~\ref{fig:sys-exeenv-overhead} shows that, in a training iteration, up to $88.7\%$ of time is spent on computing and applying gradients (i.e., `data handling' and `model updating').
These compute-intensive tasks can benefit from accelerators.
We integrated NVIDIA H100 GPU TEE~\cite{nvidia-cc}, and evaluated it with AGNEWS-Roberta-base.
As expected, Figure~\ref{fig:sys-eval-gpu-tee} shows that, using a GPU TEE greatly accelerates the training speed.
For instance, 
for a batch size of 128, the latency is reduced by 93.9\% for `data handling' and 96.9\% for `model updating' operations (SB-CPU-12 vs. SB-GPU-12), respectively. 

\if 0
I don't see a point for showing this.
Note that, the GPU utilization in our experiments stays under $10\%$,
indicating that our models and batch sizes may not be large enough.
On the other hand, the bandwidth to the GPU TEE is also limited due to encryption~\cite{nvidiah100cc},
potentially causing larger workloads to become bottlenecked.
We leave this optimization for future work~\cite{gputeeaiinference, tan2024pipellm}.
\fi

\section{Related Work}
\label{sec:related}


TEEs are widely accepted and deployed in today's cloud infrastructure~\cite{apple-pcc, azure-cc, google-cc, aws-nitro}, and have been recognized as key to safeguarding datasets and models in ML processes~\cite{aws-clean-rooms, snowflake-clean-rooms, huba2022papaya, impact-of-cc, future-cc1}.
Here, we give an overview of the TEE-based ML training systems, with more systems based on HE and SMPC being mentioned in~\S\ref{sec:intro} (due to space limit).

Early work~\cite{ohrimenko2016oblivious, hunt2018chiron} deploys the full ML training task inside one TEE, where
it is limited by TEE resources, working only for small datasets and models.
To avoid these constraints, one can perform layer-wise partitioning where only some model layers are trained inside TEEs~\cite{mo2021ppfl, mo2020darknetz}.  This practice, however, cannot protect the full model confidentiality.

Recent systems adopt a more distributed architecture.
SecureTF~\cite{quoc2020securetf} is a distributed 
secure ML framework using TEEs, built for unmodified TensorFlow applications.
SecureFL~\cite{quoc2021secfl} 
leverages TEEs in an FL scenario to ensure confidentiality and integrity of the FL training process.
Papaya~\cite{huba2022papaya} uses TEEs to implement secure aggregation to support asynchronous and synchronous FL training that scales to millions of users in production.
AWS Clean Rooms ML~\cite{aws-clean-rooms} and Snowflake Data Clean Rooms~\cite{snowflake-clean-rooms} leverage TEEs to safeguard 
proprietary data and ML models, with the help of a look-alike model for data richness.

Citadel~\cite{zhang2021citadel} is a state-of-the-art collaborative ML training system which aims to protect both dataset and model confidentiality in an untrusted infrastructure.
Like \sys, Citadel employs TEEs and applies an FL-style programming paradigm.
Unlike \sys, however, Citadel cannot achieve strong DP guarantees for individual user data, cannot use GPUs, and does not have much-needed sandboxing and integrity protections for the \emph{full} confidentiality of datasets, models and training code during collaborative training.

\if 0
Last but not least, \emph{all} existing systems cannot guarantee data privacy if the model or training code is maliciously designed; thus, assume that the model being trained is non-malicious and the training code is available for inspection.
In contrast, \sys enforces data privacy even when these assets 
are kept confidential and can be potentially malicious.
\fi


\if 0

\noindent\textbf{Cryptography-based ML Training.}
One approach to privacy-preserving ML training is to use the Homomorphic Encryption (HE).
HE can be applied to provide data privacy~\cite{bost2014machine, brickell2007privacy, graepel2012ml, li2017multi} as well as model confidentiality.
Phong \textit{et al}.~\cite{aono2017privacy} apply an additive HE scheme on the model updates, 
and prevent model leakage to the server in collaborative learning.
Nonetheless, training encrypted models over encrypted data is still a challenge 
due to orders of magnitude overhead compared with standard non-private model training~\cite{liu2021machine}.

Another approach to privacy-preserving ML training is Secure Multi-Party Computation (SMPC).
In SMPC, multiple, often non-colluding, entities use encryption and oblivious transfers 
to train a model without seeing each other's data.
SMPC has been applied to many simple models~\cite{nandakumar2019towards, tian2022sphinx, hesamifard2018privacy, sav2020poseidon, mohassel2017secureml, agrawal2019quotient, mohassel2018aby3, wagh2019securenn, patra2020blaze, chaudhari2019trident, koti2021tetrad, ng2023sok, wagh2020falcon, liu2024pencil}, 
but non-trivial computational and communication overheads still severely limit its application to modern models~\cite{liu2021machine}.

\fi

\if 0

\subsection{Other sandbox mechanism [-> To Related Works]}
gVisor~\cite{gVisor-dev} enhances security by minimizing interactions with the host system API through a multi-layered defense, with the Sentry component intercepting application requests and providing a restricted system API to prevent direct exploitation of host resources. Similar to a Virtual Machine Monitor (VMM), the Sentry interacts with host primitives instead of virtualized hardware, ensuring efficiency while isolating applications in a secure sandbox. Within this sandbox, applications can perform typical container operations, such as file access and network connections, while being limited to virtualized resources. Host interactions are restricted to essential operations, such as filesystem management via a Gofer process and minimal networking. However, gVisor introduces performance overhead due to its security layers and less optimized system call implementations, which can lead to higher memory usage, latency, and reduced throughput. Additionally, while gVisor supports most Linux syscalls, it may have unimplemented features and bugs, requiring testing to ensure compatibility with specific applications.
\tododong{compare with our sandbox}

\ks{also mention IBM nabla containers here for completeness?}

\tododong{Other sandboxing like SFI, wasm, container, VM, libOS}

\fi

\section{Conclusion}
\label{sec:conclusion}

\sys is a collaborative ML training system that simultaneously protects the confidentiality of datasets, models and training code as well as the privacy of individual user data. 
We achieve these protections via the combination of VM-level TEEs, enhanced DP mechanisms as well as various OS-level sandboxing and integrity mechanisms.
In addition,
\sys achieves similar or better privacy and model utility as compared to the standard central DP-SGD mechanisms.
Its performance also matches that of non-confidential FL and outperforms the state-of-the-art privacy-preserving training systems by up to two orders of magnitude.


\bibliographystyle{ACM-Reference-Format}
\bibliography{main}

\appendix 

\section*{Appendix}

\if 0

\section{Additional Related Works}

\subsection{Differentially Private Machine Learning}

There are some early works on traditional machine learning with differential privacy. For example, Rubinstein
et al.~\cite{rubinstein2009learning} proposed differentially private support vector machine (SVM) learning mechanisms
by adding noise to the output classifier and they yield close approximations to the non-private
SVM. Chaudhuri et al.~\cite{chaudhuri2011differentially} provided the model objective perturbation to produce deferentially
private empirical risk minimization (ERM) classifier. Song et al.~\cite{song2013stochastic} derived differentially
private SGD for general convex objectives and validated the effectiveness of the approach using
logistic regression for classification. One of the well-known early methods of implementing differential
privacy in deep learning is \cite{shokri2015privacy}. They trained the ML model “in a distributed
manner by updating the selected local gradients and adding noise to them within the privacy budget
of each parameter”. Based on this work, Abadi et al.~\cite{abadi2016deep} introduced “a simpler differential private
SGD (DP-SGD) algorithm that ensures DP by cutting the gradients to a maximum l2 norm for
each layer.” And then add the noise bounded by the “l2 norm-clipping-bound.” It was shown that
“high-quality models can be trained through privacy under a moderate privacy budget” with the
DP-SGD algorithm. In DP-SGD, the DP noise is added to the gradients and the whole training process
involves multiple iterations. Therefore, it is important to compute the overall privacy loss of
the training, i.e, privacy accounting. Although the composition theorem~\cite{dwork2014algorithmic} can be used to generate
the overall privacy loss, it can be quite loose. Abadi et al.~\cite{abadi2016deep} introduced a moments accountant
method that can track privacy loss across multiple training iterations and generate a tighter bound.
Another closely related notion is Rényi DP, which “offers a quantitatively accurate way of tracking cumulative privacy loss” throughout a multi-round DP mechanisms~\cite{mironov2017renyi}.

Prior to \cite{mcmahan2017learning}, all considered methods used “record-level differential privacy as a framework
to protect private information.” In many real-world work environments, users have multiple
data sources. They may be relevant and should be protected as a whole. Therefore, in some cases,
the DP-SGD method results in a loss of privacy at a higher level (e.g., user level). McMahan et al.~\cite{mcmahan2017learning}
introduced a “user level differential private algorithm called the DP-FedAvg algorithm to protect
all the data of a user.” Instead of limiting the “contribution of a single record,” the DP-FedAvg
algorithm limits the contribution of the user data set to the learning model. The DP-SGD algorithm
was “combined with the FederatedAveraging algorithm” from \cite{mcmahan2017communication}, which uses a server
that performs model averaging.

Obfuscation on training data has not been investigated extensively in the context of ML because
it has been deemed similar to traditional big data privacy. One notable research from Zhang
et al.~\cite{zhang2018privacy} proposed an obfuscate function and applied it to the training data before feeding them
to the model training task. This function adds random noise to existing samples, or augments the
dataset with new samples. By doing so, sensitive information about the properties of individual
samples, or statistical properties of a group of samples, is hidden. Meanwhile, the model trained
from the obfuscated dataset can still achieve high accuracy.

\fi

\section{Differential Privacy Analysis}

\subsection{Background: Gaussian Mechanism and DP Accounting}\label{sec:dp_accounting}

In a DP ML training process, each gradient leaks some information about the training dataset. DP-SGD includes a privacy accounting scheme which tracks the total $(\veps, \delta)$-privacy loss  over multiple iterations of the training process and ensures that the total cumulative effect of all successive gradient updates still satisfies the desired privacy guarantees.

In the basic case of full-batch gradient training, the privacy loss of DP-SGD is that of the Gaussian mechanism, where the randomized function $\mathcal{M}$ is of the form $$
\mathcal{M}(X) = f(X) + Z,
$$
where $f \, : \, \mathcal{X}^N \rightarrow \mathbb{R}^d$and $Z \sim \mathcal{N}(0, \sigma^2 I_d)$, where $\sigma$ denotes the noise scale parameter. Denote the sensitivity of the function $f$ by 
$$
\Delta = \max_{X \simeq X'} \norm{f(X) - f(X')}_2,
$$ 
where $X \simeq X'$ denotes that $X'$ is obtained from $X$ by a change of a single data entry (similar to the neighboring datasets definition from \S\ref{sec:dpsgd}). Then, $\mathcal{M}$ is $(\veps,\delta)$-DP for $\delta(\veps)$ given by
\begin{equation} \label{eq:delta_gaussian}
    \delta(\veps) = \Phi\left( - \frac{\veps\sigma}{\Delta} + \frac{\Delta}{2\sigma} \right)
- \ee^\veps \Phi\left( - \frac{\veps\sigma}{\Delta} - \frac{\Delta}{2\sigma} \right),
\end{equation}
where $\Phi$ denotes the CDF of the standard univariate Gaussian distribution~\cite{balle2018improving}.
The bound \eqref{eq:delta_gaussian} for the Gaussian mechanism is tight, meaning that there is no
$\delta'<\delta$ such that the mechanism would be $(\veps,\delta')$-DP. When we apply a Gaussian mechanism with sensitivity $\Delta$ and noise scale $\sigma$ for $T$ times (e.g., $T$ iterations of DP-SGD will full data at each iteration), the privacy guarantee is given by Eq.~\eqref{eq:delta_gaussian} with $\Delta$ replaced by $\sqrt{T} \Delta$~\cite{sommer2019privacy}.

For more complex mechanisms such as DP-SGD with random mini-batches, analytical formulas such as \eqref{eq:delta_gaussian} do not exist, and one has to employ \it privacy accounting. \rm The so-called R\'enyi Differential Privacy (RDP)~\cite{abadi2016deep,mironov2017} was the first approach to accurately compute the privacy bounds for DP-SGD, and it is implemented in frameworks such as Opacus~\cite{opacus2021} and Tensorflow Privacy~\cite{Tensorflowprivacy}. RDP has certain inherent inaccuracy and commonly still slightly overestimates the $\veps$-values. Recently, so-called numerical accounting methods~\cite{koskela2020,gopi2021,zhu2022optimal} have been proposed which give tight $\veps$-bounds for several methods including DP-SGD. These are based on so-called privacy loss random variables.

\subsection{DP Accounting Using Privacy Loss Random Variables}\label{sec:dp_accounting2}

We can generally find $(\veps,\delta)$-bounds for DP mechanisms by analysing so-called
dominating pairs of distributions:
\begin{definition}[\cite{zhu2022optimal}]
A pair of distributions $(P,Q)$ 
is a dominating pair of distributions for mechanism $\mathcal{M}$ if for all $\alpha \geq 0$,
$$
\max_{X \sim X'} H_\alpha(\mathcal{M}(X) || \mathcal{M}(X')) \leq H_\alpha(P || Q),
$$
where $H_\alpha( \cdot || \cdot)$ denotes the hockey-stick divergence, i.e., for distributions $P$ and $Q$,
$$
H_\alpha(P||Q) = \int [P(t) - \alpha \cdot Q(t)]_+ \, \dd t.
$$
\end{definition}
When evaluating the privacy guarantees of DP-SGD iterations and Gaussian mechanisms we know exactly the dominating pairs of distributions in both cases~\cite{zhu2022optimal}, and their compositions can be evaluated accurately using the methods presented in~\cite{koskela2020,gopi2021} via mature software implementations such as that of Opacus~\cite{opacus2021}.

Having dominating pairs of distributions for each individual mechanism in a composition, the hockey-stick
divergence can be transformed into a more easily computable form by using the privacy loss random variables (PLRVs).
PLRV for a pair of distributions $(P,Q)$ is defined as follows.
\begin{definition} \label{def:pld}
Let $P(t)$ and $Q(t)$ be probability density functions.
We define the PLRV $Y_{P/Q}$ as   
\begin{equation*}
	\begin{aligned}
	Y_{P/Q} = \log \frac{P(t)}{Q(t)}, \quad t \sim P(t),
	\end{aligned}
\end{equation*}
where $t \sim P(t)$ means that $t$ is distributed according to $P(t)$.
\end{definition}

The $\delta(\veps)$-bounds can be stated using the following representation that involves the PLRV.
\begin{thm}[\cite{gopi2021}] 
We have:
\begin{equation} \label{eq:omega_integral}
H_{\ee^\veps}(P||Q) = \mathbb{E}_{ Y_{P/Q}} \left[ 1 - \ee^{\veps-Y_{P/Q}}\right]_+,
\end{equation}
\end{thm}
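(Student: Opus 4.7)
The plan is to prove the identity by unfolding both sides to the same integral. I would start from the definition of the hockey-stick divergence with $\alpha = \ee^\veps$, namely $H_{\ee^\veps}(P\|Q) = \int [P(t) - \ee^\veps Q(t)]_+ \, \dd t$, and factor $P(t)$ out of the positive part on the set $\{t : P(t) > 0\}$:
\begin{equation*}
H_{\ee^\veps}(P\|Q) = \int_{\{P>0\}} P(t) \left[1 - \ee^\veps \frac{Q(t)}{P(t)}\right]_+ \dd t.
\end{equation*}
On the complement $\{t : P(t) = 0\}$, the original integrand $[P(t) - \ee^\veps Q(t)]_+$ vanishes since $Q(t) \geq 0$, so nothing is lost by restricting the domain.

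The next step is to recognize the PLRV inside the ratio: for $t$ in the support of $P$, $\ee^\veps Q(t)/P(t) = \ee^{\veps - \log(P(t)/Q(t))} = \ee^{\veps - y(t)}$, where $y(t) := \log(P(t)/Q(t))$ is exactly the value of $Y_{P/Q}$ realized at $t$ under Definition~\ref{def:pld}. Substituting gives
\begin{equation*}
H_{\ee^\veps}(P\|Q) = \int P(t)\, [1 - \ee^{\veps - y(t)}]_+ \, \dd t = \mathbb{E}_{t \sim P} \bigl[1 - \ee^{\veps - Y_{P/Q}}\bigr]_+,
\end{equation*}
which is the desired right-hand side, since by definition $Y_{P/Q}$ is the pushforward of $P$ under the map $t \mapsto \log(P(t)/Q(t))$, and expectations of measurable functions of $Y_{P/Q}$ reduce to $P$-integrals of the corresponding function of $y(t)$.

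The only delicate point I foresee, and the one I would handle explicitly, is the treatment of the null sets where $P$ or $Q$ vanishes. With the standard conventions $\log(a/0) = +\infty$ for $a > 0$ and $\log 0 = -\infty$, the identity extends to general $P, Q$ without absolute-continuity assumptions: on $\{P > 0, Q = 0\}$ both sides contribute $\int P(t)\,\dd t$ (the LHS directly, the RHS via $[1 - \ee^{-\infty}]_+ = 1$), and $\{P = 0\}$ has zero measure under $P$ so it cannot contribute to the RHS. This bookkeeping is the only non-mechanical aspect; once it is in place, the proof reduces to the algebraic rearrangement above, which is why~\cite{gopi2021} states the result without a lengthy derivation.
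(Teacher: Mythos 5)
Your proof is correct and complete. Note that the paper itself offers no proof of this statement: it is imported as a cited result from the reference attached to the theorem, so there is no in-paper argument to compare against. Your derivation is the standard one --- factor $P(t)$ out of the positive part on $\{P>0\}$, recognize $\ee^{\veps}Q(t)/P(t)$ as $\ee^{\veps - y(t)}$ with $y(t)=\log(P(t)/Q(t))$, and identify the resulting $P$-integral as an expectation over the pushforward law of $Y_{P/Q}$ --- and your explicit handling of the sets $\{P=0\}$ and $\{P>0, Q=0\}$ is exactly the bookkeeping needed to make the identity hold without an absolute-continuity assumption. Nothing is missing.
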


Moreover, by the results of~\cite{gopi2021} and~\cite{zhu2022optimal} we have the following result for an adaptive composition of DP mechanisms.

\begin{thm}[\cite{gopi2021,zhu2022optimal}] \label{thm:composition}
In case of an adaptive composition of $T$ mechanisms $\mathcal{M}_1, \ldots, \mathcal{M}_T$, where each mechanism $\mathcal{M}_i$, $i \in [T]$, has a dominating pair of distributions $(P_i,Q_i)$, the total DP guarantees $\big(\veps,\delta(\veps) \big)$ are given by the expression
$$
\mathbb{E}_{ Y} \left[ 1 - \ee^{\veps-Y}\right]_+,
$$
where $Y = Y_1 + \ldots + Y_T$ and each $Y_i$ is a PLRV determined by the dominating pair of distributions $(P_i,Q_i)$, respectively.
\end{thm}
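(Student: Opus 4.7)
The plan is to reduce the statement to two separable facts: (i) the product distribution $(P_1 \times \cdots \times P_T, Q_1 \times \cdots \times Q_T)$ is itself a dominating pair for the adaptive composition $\mathcal{M}_1 \circ \cdots \circ \mathcal{M}_T$, and (ii) the PLRV associated with a product pair equals the sum of the PLRVs of the factors. Combined with Equation~\eqref{eq:omega_integral}, which expresses the hockey-stick divergence as $\mathbb{E}_{Y_{P/Q}}[1 - \ee^{\veps - Y_{P/Q}}]_+$, these two facts immediately yield the claimed representation.

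For step (i), I would proceed by induction on $T$. The base case $T=1$ is just the definition of a dominating pair. For the inductive step, I would fix a pair of neighboring datasets $X \simeq X'$ and condition on the transcript $o_{<t} = (o_1,\ldots,o_{t-1})$ of outputs from the first $t-1$ mechanisms. Because the composition is adaptive, the conditional distribution of $\mathcal{M}_t(X \mid o_{<t})$ is still dominated by $(P_t, Q_t)$ by assumption (the dominating property is required to hold uniformly in the auxiliary inputs to $\mathcal{M}_t$, which is the standard reading of~\cite{zhu2022optimal}). The key analytic step is then to show that hockey-stick divergence respects such a conditional/product structure: concretely, if $(P,Q)$ dominates mechanism $A$ and $(P',Q')$ dominates the conditional mechanism $B(\cdot \mid a)$ uniformly in $a$, then $(P \times P', Q \times Q')$ dominates the joint $(A, B)$. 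This follows from the variational/coupling characterization of $H_\alpha$ or, equivalently, from writing $H_\alpha$ as a supremum over measurable events and applying Fubini.

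Step (ii) is a direct computation: for $t \sim P_1 \times \cdots \times P_T$ with coordinates $t_i \sim P_i$ independently,
\begin{equation*}
Y_{P_1 \times \cdots \times P_T \,/\, Q_1 \times \cdots \times Q_T} \;=\; \log \frac{\prod_i P_i(t_i)}{\prod_i Q_i(t_i)} \;=\; \sum_{i=1}^T \log \frac{P_i(t_i)}{Q_i(t_i)} \;=\; \sum_{i=1}^T Y_i,
\end{equation*}
so the PLRV of the product pair is distributed as $Y = Y_1 + \cdots + Y_T$ with independent summands. Plugging this into~\eqref{eq:omega_integral} applied to the product dominating pair gives exactly $\mathbb{E}_Y[1 - \ee^{\veps - Y}]_+$, which upper bounds $H_{\ee^\veps}(\mathcal{M}(X) \,\Vert\, \mathcal{M}(X'))$ for every neighboring pair by step (i), and therefore bounds $\delta(\veps)$.

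The main obstacle I expect is the careful handling of adaptivity in step (i), where the mechanism $\mathcal{M}_t$ is a function of the previous outputs and hence the PLRVs $Y_i$ are not a priori independent. The argument must justify that we may still treat the summands as independent by invoking a uniform-in-auxiliary-input dominating pair for each $\mathcal{M}_t$; this is precisely the strengthening that~\cite{zhu2022optimal} establish (their notion of a dominating pair is robust under post-processing and adaptive composition). Once that uniform property is granted, the factorization of $H_\alpha$ over the adaptive transcript and the additive decomposition of log-likelihood ratios are routine, and the theorem follows.
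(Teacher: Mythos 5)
The paper offers no proof of this statement: Theorem~\ref{thm:composition} is imported verbatim from \cite{gopi2021,zhu2022optimal} and used as a black box, so there is no in-paper argument to compare against. Your two-step reduction --- (i) the product pair $(P_1\times\cdots\times P_T,\,Q_1\times\cdots\times Q_T)$ dominates the adaptive composition, and (ii) the PLRV of a product pair is the independent sum $Y_1+\cdots+Y_T$, which you then substitute into Eq.~\eqref{eq:omega_integral} --- is precisely the route taken in those references, and you correctly locate the crux in the adaptivity of step (i): the domination of $\mathcal{M}_t$ must hold uniformly over the transcript $o_{<t}$, which is exactly the hypothesis under which \cite{zhu2022optimal} prove their composition theorem. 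The one place you are too quick is the claim that the domination of the joint by the product pair ``follows from writing $H_\alpha$ as a supremum over measurable events and applying Fubini'': the optimal rejection event for the composed mechanism need not be a product event, so the supremum does not factor, and the actual argument conditions on $o_{<t}$ and uses a conditional-expectation identity for the hockey-stick divergence (equivalently, joint convexity of $(P,Q)\mapsto H_\alpha(P\Vert Q)$) together with the uniform domination hypothesis; this is the content of the adaptive composition theorem in \cite{zhu2022optimal} and deserves either a full proof or an explicit citation rather than a Fubini gloss. With that lemma in hand, your step (ii) and the substitution into \eqref{eq:omega_integral} are routine, and you are right to read the conclusion as an upper bound on $\delta(\veps)$ for every neighboring pair (tight exactly when the individual dominating pairs are tight).
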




The DP-SGD privacy accounting can be seen as an adaptive composition of DP mechanisms (e.g., Gaussian mechanisms or subsampled Gaussian mechanisms), each with independent randomness. Therefore, these compositions can be evaluated using PLRVs determined by the individual mechanisms, i.e., using Theorem~\ref{thm:composition}.

We also get from Theorem~\ref{thm:composition} the following theorem for privacy accounting of DP-SGD combined with dynamic gradient clipping.

\begin{thm} \label{thm:dpsgd}
Consider a model training with $T$ iterations of DP-SGD with noise scale $\sigma$ and subsampling ratio $q$ (batch size divided by the total dataset size) and suppose there are in total $n_g$ noisy aggregations of the histograms for gradient clipping, each with the noise parameter $\sigma_g$. Then, for the final model, an $(\veps,\delta)$-upper bound is given by the expression
\begin{equation}  \label{eq:delta_pld_expression}
    \delta(\veps) = \mathbf{E}_{Y}[1 - e^{\veps-Y}]_+,
\end{equation}
where $[z]_+ = \max \{z,0\}$ and the random variable 
$$
Y= \sum_{i=1}^T Y_i + \sum_{i=1}^{n_g} \widetilde Y_i,
$$
where each random variable $Y_i$ is a PLRV of DP-SGD for the parameters $\sigma$ and $q$ and each random variable $\widetilde Y_i$ is a PLRV of the Gaussian mechanism for the parameter $\sigma_g$.
\end{thm}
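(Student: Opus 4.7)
The plan is to view the entire training procedure as an adaptive composition of well-understood DP mechanisms and then invoke Theorem~\ref{thm:composition} directly. Concretely, I would observe that the computation producing the final model consists of two interleaved families of releases: (i) the $T$ DP-SGD gradient updates, each of which is a subsampled Gaussian mechanism with subsampling ratio $q$ and noise scale $\sigma$ applied to per-sample gradients whose $L_2$-norm is bounded by the current clipping bound $C$; and (ii) the $n_g$ noisy histogram aggregations used by the dynamic clipping procedure of \S\ref{sec:arch:dynamic_clipping}, each of which is a Gaussian mechanism with noise scale $\sigma_g$. Everything else the training loop does with these outputs (selecting a percentile, updating model parameters, choosing hyperparameters for the next round) is post-processing of already-released DP outputs and therefore does not consume additional privacy budget, by the post-processing property.

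Next I would establish dominating pairs for each constituent mechanism. For the subsampled Gaussian mechanism, the dominating pair $(P_i, Q_i)$ is the one given in~\cite{zhu2022optimal} for noise scale $\sigma$ and ratio $q$; crucially, this pair is independent of the realized value of $C$, since the normalization by $C$ makes the effective per-example sensitivity equal to $1$ in units of $C$ (the noise is $\sigma C$ on clipped vectors of norm at most $C$). For each histogram aggregation, the sensitivity analysis given in \S\ref{sec:security_analysis_formal_dp} applies: changing one data item affects the counts of at most two adjacent bins by $\pm 1$, giving $L_2$-sensitivity $\sqrt{2}$, and so the standard Gaussian-mechanism dominating pair with noise parameter $\sigma_g$ (rescaled by the sensitivity $\sqrt{2}$) yields a PLRV $\widetilde Y_i$ of the required form.

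Having fixed dominating pairs for each mechanism, I would appeal to Theorem~\ref{thm:composition} to conclude that the adaptive composition is dominated by the sum of independent PLRVs $Y = \sum_{i=1}^T Y_i + \sum_{i=1}^{n_g} \widetilde Y_i$, and that the tight $(\veps,\delta)$-bound is given by the hockey-stick expression $\delta(\veps) = \mathbb{E}_Y[1 - \ee^{\veps - Y}]_+$, which is exactly Eq.~\eqref{eq:delta_pld_expression}. Adaptivity is handled by the composition theorem as stated, so the fact that the choice of clipping bound used in iteration $t+1$ depends on the outputs of earlier histogram and gradient mechanisms does not affect the bound.

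The main obstacle will be rigorously justifying that the dominating-pair analysis of the subsampled Gaussian mechanism remains valid when the clipping threshold $C$ is itself a (differentially private) function of previously released outputs. The key observation I would formalize is that, conditional on all prior releases, $C$ is a fixed scalar, and the per-example contribution to the sum after clipping has $L_2$-norm at most $C$, so the post-noise release is exactly a subsampled Gaussian mechanism with sensitivity $C$ and noise standard deviation $\sigma C$. Hence its PLRV depends only on $\sigma$ and $q$, not on $C$, and the dominating pair $(P_i,Q_i)$ is the canonical one; this is precisely the independence-of-$C$ property needed to slot the iteration into the adaptive composition framework of Theorem~\ref{thm:composition} without inflating any constants.
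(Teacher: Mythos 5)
Your proposal is correct and follows essentially the same route as the paper, which obtains Theorem~\ref{thm:dpsgd} directly as an instance of the adaptive composition result (Theorem~\ref{thm:composition}) applied to the $T$ subsampled Gaussian mechanisms of DP-SGD and the $n_g$ Gaussian histogram aggregations. You in fact supply more detail than the paper's one-line derivation, in particular the observation that each iteration's PLRV depends only on $\sigma$ and $q$ and not on the adaptively chosen clipping bound $C$ (since both the sensitivity and the noise scale are proportional to $C$), which is precisely what makes the adaptive composition framework applicable.
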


We remark that the privacy guarantees of the expression given in Theorem~\ref{thm:dpsgd} can be readily evaluated using the implementation of the numerical accounting given in Opacus library.


\subsection{DP Analysis for Error Correction}\label{sec:app:errcorr}

\subsubsection{Proof of Error Correction Formula}

As stated in \S\ref{sec:arch:noise_correction}, assuming that the admin component computed private noise $ \xi_t, \xi_{t+1} \sim N(0, \sigma^2 C^2 \mathbf{I}) $ during training round $t$, we then compute the noise to be added at time $t+1$ as $\xi_{t+1} - \lambda \cdot \xi_t$, where $\lambda \in [0,1]$.
This technique removes the noise added at time $t$ by subtracting it at time $t+1$. Because this value is computed inside the admin component, an attacker cannot know which part of the noise visible at $t+1$ comes from $\xi_{t+1}$ and which part comes from subtracting the previous noise, thus this technique does not leak the value $\lambda \xi_t$.
For $i=1$, the output is of the form
\begin{equation} \label{eq:er_corr1}
    \mathcal{M}_1(X) = \sum_{x \in X} f(x,t_0) + \xi_1,
\end{equation}
and for $i>1$, the subsequent outputs (subsequent randomized mechanisms) are of the form 
\begin{equation} \label{eq:er_corr2}
\mathcal{M}_i(X) = \sum_{x \in X} f(x,t_{i-1}) + \xi_i - \lambda \cdot \xi_{i-1},
\end{equation}
where $\xi_i \sim N(0,\sigma^2 C^2 \mathbf{I})$, and $\xi_i$'s are mutually independent.
In our case, the function $f(x,t)$ corresponds to the gradient of the loss function evaluated at data point $x$ and model $t$,
and $t_i$'s denote the model weights at iteration $i$, obtained using the outputs $\mathcal{M}_j(X)$, $j \leq i$.
$t_0$ is a some random initialization of the model weights.

If we repeatedly would call the function $\mathcal{M}_1$, in total $T$ times (e.g., the plain DP gradient descent training with $T$ iterations), then the $\delta$ as a function of $\veps$ from section~\ref{sec:dp_accounting} is given by
\begin{equation} \label{eq:delta_gaussian2}
	\delta(\veps) = \Phi\left( - \frac{\veps\sigma}{\sqrt{T}} + \frac{\sqrt{T}}{2\sigma} \right)
	- e^\veps \Phi\left( - \frac{\veps\sigma}{\sqrt{T}} - \frac{\sqrt{T}}{2\sigma} \right).
\end{equation}

For the corrected formula we can derive the following DP guarantee.

\begin{thm} \label{thm:correction}
Consider the sequence of mechanisms $\mathcal{M}(X) = \big(  \mathcal{M}_1(X), \ldots, \mathcal{M}_T(X) \big)$, where each $\mathcal{M}_i$ is a randomized mechanism as described in Eq.~\eqref{eq:er_corr1} and~\eqref{eq:er_corr2}, where $0<\lambda<1$. Without loss of generality, suppose the function $f$ has sensitivity 1. Then, 
for the DP analysis of the correction formula, an $(\veps,\delta)$-upper bound
is given by the expression
\begin{equation} \label{eq:corr_delta_gaussian3}
    \delta(\veps) = \Phi\left( - \frac{\veps\widetilde{\sigma}}{\sqrt{T}} + \frac{\sqrt{T}}{2\widetilde{\sigma}} \right)
- e^\veps \Phi\left( - \frac{\veps\widetilde{\sigma}}{\sqrt{T}} - \frac{\sqrt{T}}{2\widetilde{\sigma}} \right),
\end{equation}
where $\widetilde{\sigma} = (1-\lambda) \cdot \sigma$.
\end{thm}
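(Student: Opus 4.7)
The plan is to reduce the correlated-noise mechanism to an adaptive composition of \emph{independent-noise} Gaussian mechanisms via a bijective, data-independent post-processing, after which the bound follows from the standard Gaussian accounting recalled in Sections~\ref{sec:dp_accounting}--\ref{sec:dp_accounting2}. Concretely, I would set $V_0 = 0$ and $V_i = \mathcal{M}_i + \lambda V_{i-1}$ for $i=1,\ldots,T$, so that $V_i = \sum_{j=1}^i \lambda^{i-j}\mathcal{M}_j$. The map $(\mathcal{M}_1,\ldots,\mathcal{M}_T)\mapsto(V_1,\ldots,V_T)$ is a lower-triangular linear transformation with unit diagonal, hence invertible (its inverse is $\mathcal{M}_i = V_i - \lambda V_{i-1}$), and it does not depend on the dataset. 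By the post-processing immunity of DP applied in both directions, the joint $(\veps,\delta)$-guarantees of the two sequences coincide.

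Next, a telescoping calculation shows that the noise injected into $V_i$, namely $\sum_{j=1}^i \lambda^{i-j}(\xi_j - \lambda\xi_{j-1})$ with the convention $\xi_0 = 0$, collapses to $\xi_i$ after cancellation of all intermediate terms. Therefore
\begin{equation*}
V_i = \widetilde f_i(X) + \xi_i, \qquad \widetilde f_i(X) := \sum_{j=1}^i \lambda^{i-j}\sum_{x\in X} f(x,t_{j-1}),
\end{equation*}
with $\xi_i \sim \mathcal{N}(0,\sigma^2 \mathbf{I})$ independent across $i$, so $(V_1,\ldots,V_T)$ is a genuine adaptive sequence of independent-noise Gaussian mechanisms. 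To bound the sensitivity of each $V_i$ in the adaptive sense I condition on the transformed history $V_1,\ldots,V_{i-1}$: since each $t_k$ is a deterministic function of $\mathcal{M}_1,\ldots,\mathcal{M}_k$ and hence of $V_1,\ldots,V_k$, the entire weight trajectory $t_0,\ldots,t_{i-1}$ becomes the \emph{same} function of the history for both $X$ and $X'$. The triangle inequality together with the sensitivity-$1$ assumption on $f$ then gives
\begin{equation*}
\norm{\widetilde f_i(X) - \widetilde f_i(X')}_2 \;\le\; \sum_{j=1}^i \lambda^{i-j} \;=\; \frac{1-\lambda^i}{1-\lambda} \;=:\; \Delta_i \;\le\; \frac{1}{1-\lambda}.
\end{equation*}

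Thus each $V_i$ is dominated by a Gaussian mechanism with noise scale $\sigma$ and $L_2$-sensitivity $\Delta_i$. Applying Theorem~\ref{thm:composition} to these Gaussian dominating pairs (whose PLRVs are Gaussian and sum under composition), the full sequence is dominated by a single Gaussian mechanism with noise $\sigma$ and sensitivity $\Delta_{\mathrm{tot}} = \sqrt{\sum_{i=1}^T \Delta_i^2} \le \sqrt{T}/(1-\lambda)$. Plugging the ratio $\sigma/\Delta_{\mathrm{tot}} \ge (1-\lambda)\sigma/\sqrt{T} = \widetilde{\sigma}/\sqrt{T}$ into the tight Gaussian formula~\eqref{eq:delta_gaussian}, and invoking the monotonicity of $\delta$ in the noise-to-sensitivity ratio, yields exactly~\eqref{eq:corr_delta_gaussian3}. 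The main subtle point will be the adaptivity step: one must justify that conditioning on the post-processed history $V_1,\ldots,V_{i-1}$ (rather than on $\mathcal{M}_1,\ldots,\mathcal{M}_{i-1}$) is legitimate for invoking the Gaussian composition, which holds because the post-processing is bijective and data-independent, so the two histories generate the same $\sigma$-algebra and the model trajectory decouples from the $X$-vs-$X'$ branch.
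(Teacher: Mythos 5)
Your proposal is correct and follows essentially the same route as the paper's proof: your transformed sequence $V_i = \mathcal{M}_i + \sum_{j=1}^{i-1}\lambda^j \mathcal{M}_{i-j}$ is exactly the "equivalent" mechanism $\sum_{j=0}^{i-1}\lambda^j\bigl(\sum_{x\in X} f(x,t_{i-1-j})\bigr) + \xi_i$ that the paper extracts by induction, with the same per-step sensitivity bound $\sum_{j=0}^{i-1}\lambda^j \le 1/(1-\lambda)$ and the same final Gaussian composition. Your packaging of the reduction as a data-independent invertible lower-triangular post-processing (rather than an explicit induction) is a cleaner justification of the adaptivity step and matches the matrix-mechanism viewpoint the paper itself adopts in Appendix~\ref{sec:matrix_mechanism}.
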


\begin{proof}
 We may write for $\mathcal{M}_2$:
\begin{equation*}
\begin{aligned}
\mathcal{M}_2(X) &= \sum_{x \in X} f(x,t_{1}) + \xi_2 - \lambda \cdot \xi_1 \\
&= \sum_{x \in X} f(x,t_1) + \lambda \cdot \sum_{x \in X}  f(x,t_0)+ \xi_2  \\ 
&- \lambda \left(\sum_{x \in X} f(x,t_0) + \xi_1 \right) \\
&= \sum_{x \in X} f(x,t_1) + \lambda \cdot \sum_{x \in X}  f(x,t_0)+ \xi_2 - \lambda \cdot \mathcal{M}_1(X). \\
\end{aligned}
\end{equation*}
We see that analyzing $\mathcal{M}_2(X)$ is equivalent to analyzing 
$$
\sum_{x \in X} f(x,t_1) + \lambda \cdot \sum_{x \in X}  f(x,t_0)+ \xi_2
$$
since $\mathcal{M}_1(X)$ was already released (post-processing, does not affect the DP guarantees). We also see that the sensitivity of the deterministic part
$$
\sum_{x \in X} f(x,t_1) + \lambda \cdot \sum_{x \in X}  f(x,t_0)
$$
is $(1+\lambda)$. Continuing, we see that
\begin{equation*}
\begin{aligned}
\mathcal{M}_3(X) &= \sum_{x \in X} f(x,t_2) + \xi_3 - \lambda \cdot \xi_2 \\
&= \sum_{x \in X} f(x,t_2) + \xi_3 - \lambda \cdot \bigg (\sum_{x \in X} f(x,t_1) \\
&+ \lambda \cdot \sum_{x \in X}  f(x,t_0)  - \lambda \cdot \mathcal{M}_1(X)  + \xi_2 \bigg)\\
&+ \lambda \cdot \sum_{x \in X}  f(x,t_1)  + \lambda^2 \cdot \sum_{x \in X}  f(x,t_0) - \lambda^2 \cdot \mathcal{M}_1(X) \\
&= \sum_{x \in X} f(x,t_2) + \xi_3 - \lambda \cdot \mathcal{M}_2(X) \\
&+ \lambda \cdot \sum_{x \in X}  f(x,t_1)  + \lambda^2 \cdot \sum_{x \in X}  f(x,t_0) - \lambda^2 \cdot \mathcal{M}_1(X).
\end{aligned}
\end{equation*}

We see that analyzing $\mathcal{M}_3(X)$ is equivalent to analyzing
$$
\sum_{x \in X} f(x,t_2) + \lambda \cdot \sum_{x \in X}  f(x,t_1) + \lambda^2 \cdot \sum_{x \in X}  f(x,t_0) + \xi_3,
$$
since $\mathcal{M}_1(X)$ and $\mathcal{M}_2(X)$ were already released (post-processing). We also see that the deterministic term
$$
\sum_{x \in X} f(x,t_2) + \lambda \cdot \sum_{x \in X}  f(x,t_1) + \lambda^2 \cdot \sum_{x \in X}  f(x,t_0)
$$
has sensitivity $1+\lambda+\lambda^2$.

We see the pattern and show the general case by induction. Suppose, $\mathcal{M}_i(X)$, $i\geq 3$, is of the form
\begin{equation} \label{eq:induction_step}
\begin{aligned}
\mathcal{M}_i(X) = \sum_{j=0}^{i-1} \lambda^j \left( \sum_{x \in X} f(x,t_{i-1-j}) \right) - \sum_{j=1}^{i-1} \lambda^j \mathcal{M}_{i-j} + \xi_i.
\end{aligned}
\end{equation}
Then, by definition,
\begin{equation} \label{eq:induction_step2}
\begin{aligned}
\mathcal{M}_{i+1}(X) &=  \sum_{x \in X} f(x,t_i)  + \xi_{i+1} - \lambda \xi_{i-1} \\ 
=& \sum_{x \in X} f(x,t_i)  + \xi_{i+1}  +\sum_{j=0}^{i-2} \lambda^{j+1} \cdot \left( \sum_{x \in X} f(x,t_{i-2-j}) \right) \\
& - \sum_{j=1}^{i-2} \lambda^{j+1} \mathcal{M}_{i-1-j}
- \lambda \cdot \mathcal{M}_{i-1}(X) \\
&= \sum_{j=0}^{i} \lambda^j \left( \sum_{x \in X} f(x,t_{i-j}) \right) - \sum_{j=1}^{i} \lambda^j \mathcal{M}_{i+1-j} + \xi_{i+1},
\end{aligned}
\end{equation}
i.e., \eqref{eq:induction_step} really holds.

By the reasoning above, the analysis of $\mathcal{M}_i(X)$ is equivalent to analysis of Gaussian mechanism with noise variance $\sigma^2$ and sensitivity $\sum_{j=0}^{i-1} \lambda^j$,
since the term $\sum_{j=1}^{i-1} \lambda^j \mathcal{M}_{i-j}$ is simply the output of previously released mechanisms and can be discarded. 

So, for large $i$, the sensitivity of the deterministic part is 
$$
\sum_{j=0}^{i-1} \lambda^j \leq \sum_{j=0}^\infty \lambda^j = \frac{1}{1-\lambda}.
$$

Thus, we get an upper bound for the $(\veps,\delta)$-DP guarantee by considering a composition of $T$ Gaussian mechanisms,
each with sensitivity $\frac{1}{1-\lambda}$ and noise scale $\sigma$. Then, using the formula~\eqref{eq:delta_gaussian2}, we arrive at the result.

\end{proof}

Because the DP guarantees of the noise correction are also those of the Gaussian mechanism for a suitably chosen noise scale parameter, we can apply Theorem~\ref{thm:correction} to state the following privacy accounting in noise correction:

\begin{thm} \label{thm:correction}
Consider a model training where we have $T$ noise correction training steps $\big(  \mathcal{M}_1(X), \ldots, \mathcal{M}_T(X) \big)$, where each $\mathcal{M}_i$ is a randomized mechanism
, where $0<\lambda<1$, and suppose there are in total $n_g$ aggregation steps. Then, for the final model, an $(\veps,\delta)$-upper bound is given by the expression
\begin{equation}  \label{eq:delta_analytical}
    \delta(\veps) = \Phi\left( - \veps \sigma_{\mathrm{total}} + \frac{1}{2\sigma_{\mathrm{total}}} \right)
- e^\veps \Phi\left( - \veps \sigma_{\mathrm{total}} - \frac{1}{2\sigma_{\mathrm{total}}} \right),
\end{equation}
where $\Phi$ denotes the CDF of the standard univariate Gaussian distribution, $\sigma_{\mathrm{total}} = \sqrt{\tfrac{T}{\widetilde{\sigma}^2} + \tfrac{n_g}{\sigma_g^2}}$ denotes the standard deviation of the resulting PLRV and $\widetilde{\sigma} = (1-\lambda) \cdot \sigma$.

\begin{proof}
The proof of Theorem~\ref{thm:correction} follows directly from Theorem~\ref{thm:composition} above. The analytical expression of Theorem~\ref{thm:correction} follows from the fact that the PLRV for the Gaussian mechanism with noise scale $\sigma$ and $L_2$-sensitivity $\Delta$ is distributed as $\mathcal{N}\left(\frac{\Delta^2}{2 \sigma^2},\frac{\Delta^2}{\sigma^2}\right)$~\cite{sommer2019privacy} and from the fact that the means and variance in the sums of Gaussian random variables sum up, and by plugging in the resulting Gaussian random variable in the formula~\eqref{eq:delta_pld_expression}. The analytical form of Eq.~\eqref{eq:delta_analytical} for a Gaussian PLRV with noise variance $\sigma_{\mathrm{total}}^2$ is shown, e.g., in~\cite{balle2018improving,sommer2019privacy}.

\end{proof}
\end{thm}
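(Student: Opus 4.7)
The plan is to reduce the claim to a straightforward application of the PLRV-based composition theorem (Theorem~\ref{thm:composition}) once each of the two kinds of mechanisms involved, namely the $T$ noise-correction gradient steps and the $n_g$ noisy histogram aggregations, has been identified with a Gaussian mechanism of a known noise scale. Then the final analytical expression in Eq.~\eqref{eq:delta_analytical} follows by summing independent Gaussian PLRVs and invoking the closed-form Gaussian-mechanism $\delta(\veps)$ formula (Eq.~\eqref{eq:delta_gaussian}).

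First, I would reuse the telescoping/inductive argument already carried out for the first version of Theorem~\ref{thm:correction}: at iteration $i$, after treating all previously released outputs $\mathcal{M}_1(X),\ldots,\mathcal{M}_{i-1}(X)$ as post-processing, the release $\mathcal{M}_i(X)$ is equivalent to a single Gaussian mechanism on the deterministic quantity $\sum_{j=0}^{i-1} \lambda^j \sum_{x\in X} f(x,t_{i-1-j})$ with additive noise $\xi_i \sim \mathcal{N}(0,\sigma^2 C^2 \mathbf{I})$. The $L_2$-sensitivity of this quantity is bounded by $\sum_{j=0}^{\infty}\lambda^j = 1/(1-\lambda)$, so the corresponding dominating pair is that of a Gaussian mechanism with sensitivity $1$ and noise scale $\widetilde\sigma=(1-\lambda)\sigma$. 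For the $n_g$ histogram aggregations used by dynamic gradient clipping (\S\ref{sec:arch:dynamic_clipping}), each step is already a Gaussian mechanism (after normalising away the sensitivity) with noise scale $\sigma_g$.

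Next, I would invoke Theorem~\ref{thm:composition} and use the fact recalled in the excerpt that the PLRV of a Gaussian mechanism with noise scale $\sigma$ and sensitivity $1$ is distributed as $\mathcal{N}\!\left(\tfrac{1}{2\sigma^2},\tfrac{1}{\sigma^2}\right)$. Since the noise drawn in different iterations is mutually independent (the $\xi_i$'s are independent, and the aggregation noise is independent of the training noise), the total PLRV $Y = \sum_{i=1}^T Y_i + \sum_{i=1}^{n_g}\widetilde Y_i$ is again Gaussian, with mean and variance equal to the sums of the per-step means and variances. A short calculation then yields a Gaussian PLRV of variance $\sigma_{\mathrm{total}}^{-2}\cdot\sigma_{\mathrm{total}}^{2}$ \ldots more concretely, variance $T/\widetilde\sigma^2 + n_g/\sigma_g^2 = \sigma_{\mathrm{total}}^{2}$ (after the usual rescaling so that the PLRV's variance equals its mean's double), which upon substitution into Eq.~\eqref{eq:omega_integral} and the standard evaluation of $\mathbb{E}[1-e^{\veps-Y}]_+$ for a Gaussian $Y$ reproduces the analytical formula \eqref{eq:delta_analytical}.

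The only genuinely non-trivial step is the first one, the telescoping reduction that replaces the noise-correction recursion with an equivalent Gaussian mechanism of sensitivity $1/(1-\lambda)$; but this has already been carried out in the proof of the preliminary Theorem~\ref{thm:correction} via induction on $i$ (see Eqs.~\eqref{eq:induction_step}--\eqref{eq:induction_step2}), so the present proof essentially packages that reduction together with the additive structure of independent Gaussian PLRVs and the known closed form for the Gaussian mechanism. I would therefore expect the writeup to be short, with the main care required in (i) making explicit that releasing earlier $\mathcal{M}_j(X)$'s is post-processing and so does not inflate the DP cost, and (ii) justifying the independence between the DP-SGD noise and the histogram-aggregation noise used to combine the two kinds of PLRVs.
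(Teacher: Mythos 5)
Your proposal matches the paper's own proof essentially step for step: both reduce the $T$ noise-correction steps to Gaussian mechanisms of effective noise scale $\widetilde{\sigma}=(1-\lambda)\sigma$ via the earlier telescoping argument, treat the $n_g$ histogram aggregations as Gaussian mechanisms with scale $\sigma_g$, invoke Theorem~\ref{thm:composition} with the fact that independent Gaussian PLRVs sum to a Gaussian with variance $T/\widetilde{\sigma}^2 + n_g/\sigma_g^2$, and then apply the closed-form Gaussian-mechanism expression. The only blemish is the garbled phrase ``variance $\sigma_{\mathrm{total}}^{-2}\cdot\sigma_{\mathrm{total}}^{2}$,'' but the subsequent correct identification of the total PLRV variance shows you have the right computation.
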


\subsubsection{Comparison of Total Amount of Injected Noise: With and Without Noise Correction}

Each DP gradient descent update is of the form
\begin{equation} \label{eq:dp_gd}
	\theta_{i+1} = \theta_i - \eta \cdot \frac{1}{|X|} \left( \sum_{x \in X} \widetilde{\nabla}_\theta f(x,\theta_i) + \xi_i \right)
\end{equation} 
where $\theta_i$ denotes the model parameter vector at iteration $i$, $X$ denotes the dataset,
$|X|$ the dataset size, and 
$\widetilde{\nabla}_\theta f$'s the clipped sample-wise gradients, and $\xi_i \sim \mathcal{N}(0,\sigma^2 C^2)$ is the DP noise, where $C$ denotes the clipping bound 
(i.e., $\norm{\widetilde{\nabla}_\theta f(x,\theta_i)}_2 \leq C$ for all $x$, due to clipping).

Assume, without loss of generality, that $C=1.0$ (otherwise we can scale $\sigma$). If we run DP gradient descent (DP-GD) \eqref{eq:dp_gd} for $T$ iterations, we see that the total amount of noise we inject in the model is
\begin{equation} \label{eq:noise_dp_gd}
	\frac{1}{|X|} \sum_{i=1}^T \xi_i \sim \mathcal{N}\left(0,\frac{ T \cdot \sigma^2 }{ |X|^2 } \mathbf{I} \right)
\end{equation} 
DP-GD with the correction is of the form
\begin{equation} \label{eq:corr_dp_gd}
	\theta_{i+1} = \theta_i - \eta \cdot \frac{1}{|X|} \left( \sum_{x \in X} \widetilde{\nabla}_\theta f(x,\theta_i) + \xi_i - \lambda \cdot \xi_{i-1} \right)
\end{equation} 
and we see that after $T$ iterations the total amount of noise injected in the model is approximately
\begin{equation} \label{eq:noise_corr_dp_gd}
	\frac{1}{|X|} \sum_{i=1}^T (1-\lambda) \cdot \xi_i \sim \mathcal{N}\left(0,\frac{ T \cdot (1-\lambda)^2 \cdot \sigma^2 }{ \abs{X}^2 } \mathbf{I} \right).
\end{equation}

\it We see that when using the noise correction with the scaled noise $\widetilde{\sigma} = \sigma \cdot (1-\lambda)$, we get both equal privacy guarantees (Thm.\;\ref{thm:correction} vs. Eq.~\eqref{eq:delta_gaussian2})
and an equal amount of injected noise (Eq.~\eqref{eq:noise_dp_gd} vs. Eq.~\eqref{eq:noise_corr_dp_gd}). 
as when using DP-GD with noise scale $\widetilde{\sigma}$. \rm
Thus, we may expect a similar privacy-utility ratio for both DP gradient descent and DP-GD with noise correction.

\begin{figure*}[t!]
    \centering
    \begin{subfigure}{0.45\textwidth}
        \includegraphics[width=\linewidth]{figures/accs_vs_iterations07.pdf}
        \caption{$\lambda=0.7$}
        \label{fig:mnist_fig2_07}
    \end{subfigure}
    \hskip 2em
    \begin{subfigure}{0.45\textwidth}
        \includegraphics[width=\linewidth]{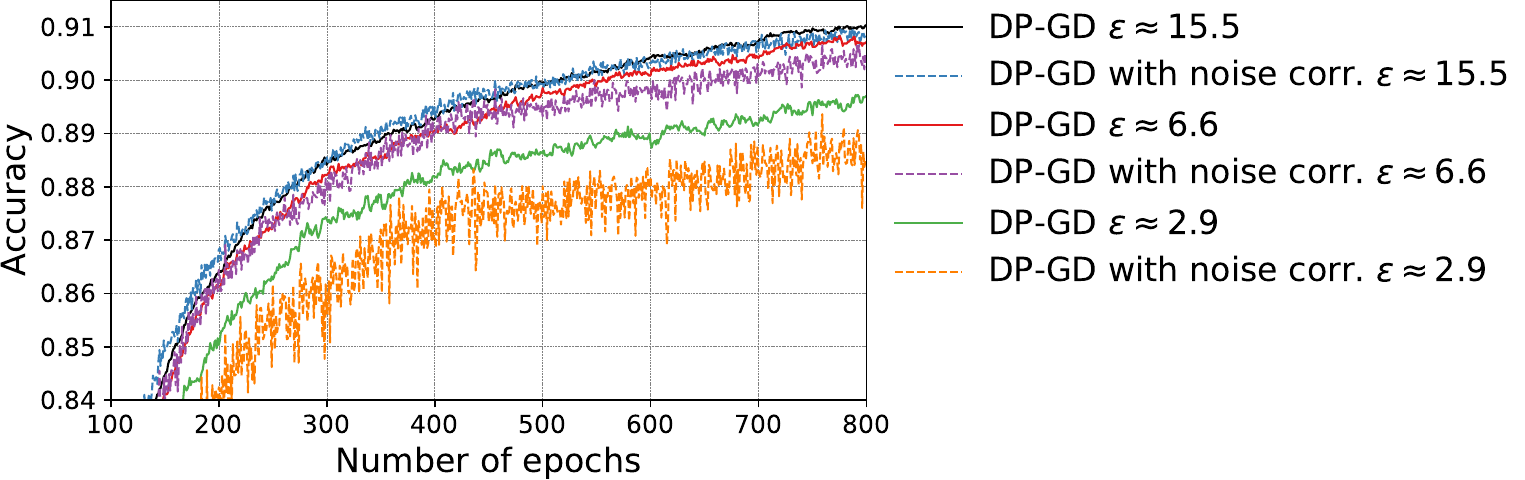}
        \caption{$\lambda=0.9$}
        \label{fig:mnist_fig2_09}
    \end{subfigure}
    \caption{MNIST classification problem and a two hidden layers feedforward neural network, trained using the DP Gradient Descent and using the DP Gradient Descent with Noise Correction}
	\label{fig:mnist_fig2}
\end{figure*}

Figure~\ref{fig:mnist_fig2} shows the performance of the error correction mechanism on the model presented in~\ref{sec:evaluation-dp} with $\lambda=0.7$ and $\lambda=0.9$. We observe that the model utility remains very similar to no-error correction training with $\lambda=0.7$, while it slightly degrades with large noise correction.

Some inaccuracy arise with large noise correction, when the model used to compute the updates, which has received noise $\xi_{i-1}$, and the corrected model after the next iteration, which has received noise $(1-\lambda) \xi_{i-1}$, differ significantly, i.e., when the correction is so large that the computed model update is no longer valid for the corrected model.

\begin{figure*} [t]
	\centering
    \begin{subfigure}{0.35\textwidth}
    	\includegraphics[width=\textwidth]{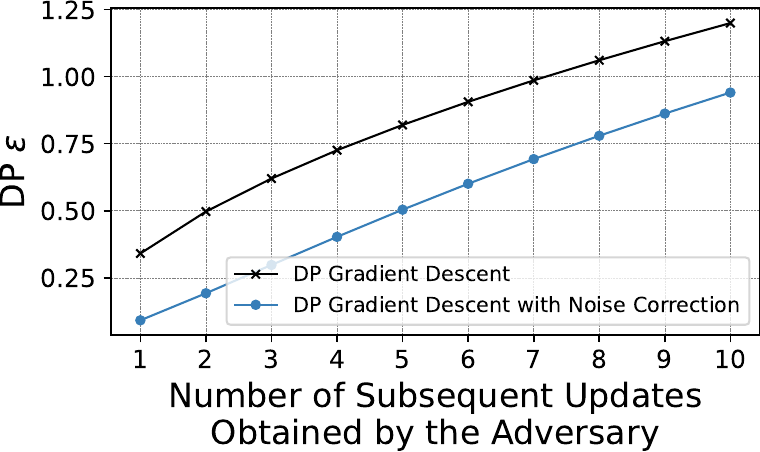}
    	\caption{$\lambda=0.7$}
    	\label{fig:mnist_fig_subsequent07}
    \end{subfigure}
    \hskip 5em
    \begin{subfigure}{0.35\textwidth}
    	\includegraphics[width=\textwidth]{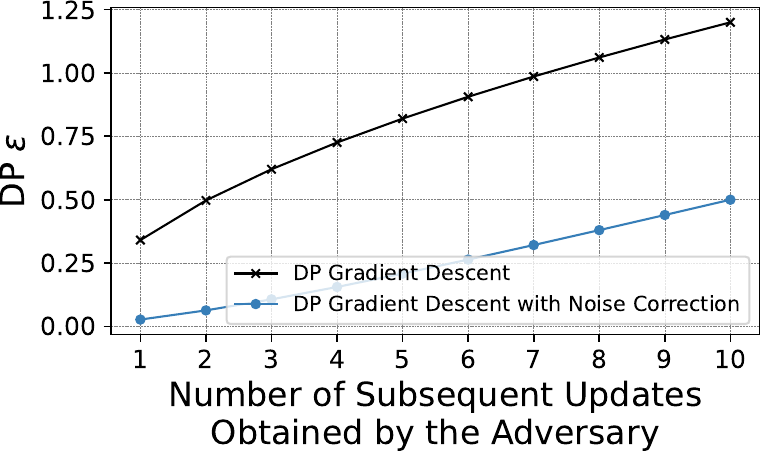}
    	\caption{$\lambda=0.9$}
    	\label{fig:mnist_fig_subsequent09}
    \end{subfigure}
    \caption{$\veps$ values for sequences of subsequent updates for the DP Gradient Descent with and without noise correction, $\delta=10^{-5}$.}
	\label{fig:mnist_fig_subsequent}
\end{figure*} 

Figure~\ref{fig:mnist_fig_subsequent} illustrates the increased privacy protection for sequences of updates (of different lengths), when $\lambda=0.7$ and $\lambda=0.9$ respectively, computed using the equations derived in
Appendix~\ref{sec:subsequent_analysis}.

\subsubsection{DP Guarantees for Bounded-Length Sequences of Updates Using Noise Correction} \label{sec:subsequent_analysis}

We next analyze how the correction scheme improves the DP protection of individual updates and sequences of subsequent updates, when compared to the plain DP gradient descent. 





Let $\sigma>0$ be the noise scale.  If the adversary obtains $n$ subsequent updates, it essentially obtains a vector
\begin{equation} \label{eq:multiplied0}
\begin{aligned}
\begin{bmatrix}
 f_1(X) + \xi_1 \\
 f_2(X) + \xi_2 - \lambda \cdot \xi_1 \\
 f_3(X) + \xi_3 - \lambda \cdot \xi_2 \\
 \vdots \\
 f_n(X) + \xi_n - \lambda \cdot \xi_{n-1},
\end{bmatrix}
\end{aligned}
\end{equation}
where each function $f_i(X)$, $1 \leq i \leq n$, has sensitivity $1$ (without loss of generality), and each $\xi_i \sim \mathcal{N}(0,\sigma^2 \mathbf{I})$.

We can also write Eq.~\eqref{eq:multiplied0} as
\begin{equation*} 
\begin{aligned}
\begin{bmatrix}
 f_1(X)  \\
 f_2(X)  \\
 f_3(X)  \\
 \vdots \\
 f_n(X)
\end{bmatrix}
+
\begin{bmatrix} \mathbf{I} & & & & \\
-\lambda \mathbf{I}& \mathbf{I} & & & \\
& - \lambda \mathbf{I} & \mathbf{I} & & \\
 &  & \ddots & \ddots \\
  &  &  & - \lambda \mathbf{I} & \mathbf{I} 
\end{bmatrix}
\begin{bmatrix}
 \xi_1  \\
 \xi_2  \\
 \xi_3  \\
 \vdots \\
 \xi_n
\end{bmatrix}
\end{aligned}
\end{equation*}



DP guarantees are both multiplicative and translation-invariant, allowing us to multiply the output vector from the left with any invertible matrix.
Multiplying with
\begin{equation} \label{eq:jordan_inverse}
\begin{bmatrix} \mathbf{I} & & & & \\
-\lambda \mathbf{I}& \mathbf{I} & & & \\
& - \lambda \mathbf{I}& \mathbf{I} & & \\
 &  & \ddots & \ddots \\
  &  &  & - \lambda \mathbf{I} & \mathbf{I} 
\end{bmatrix}^{-1} =
\begin{bmatrix} \mathbf{I} & & & & \\
\lambda \mathbf{I} & \mathbf{I} & & & \\
\lambda^2 \mathbf{I} & \lambda \mathbf{I} & \mathbf{I} & & \\
\vdots & \ddots & \ddots & \ddots \\
\lambda^{n-1} \mathbf{I} & \hdots & \hdots & \lambda \mathbf{I} & \mathbf{I} 
\end{bmatrix}
\end{equation}
from the left,
we see that we may analyze the vector
\begin{equation} \label{eq:multiplied}
\begin{aligned}
\begin{bmatrix}
 f_1(X) \\
 f_2(X) + \lambda \cdot f_1(X) \\
 f_3(X) + \lambda \cdot f_2(X) + \lambda^2 \cdot f_1(X)  \\
 \vdots \\
 f_n(X) + \hdots + \lambda^{n-1} \cdot f_1(X).
\end{bmatrix}
+ \begin{bmatrix}
 \xi_1  \\
 \xi_2  \\
 \xi_3  \\
 \vdots \\
 \xi_n
\end{bmatrix}
\end{aligned}
\end{equation}

Analyzing \eqref{eq:multiplied} is equivalent to analysing a Gaussian mechanism with noise variance $\sigma^2$ and sensitivity bounded by
\begin{equation} \label{eq:sequence_sensitivity}
\begin{aligned}
& \big(1 + (1+\lambda)^2  + (1+\lambda + \lambda^2)^2  + \ldots \\
& \quad \quad (1+\lambda + \ldots + \lambda^{n-1})^2  \big)^{1/2} \\
= &  \sqrt{ \sum\limits_{\ell=0}^{n-1} (1+\lambda + \ldots + \lambda^{\ell})^2 } \\
= &  \sqrt{ \sum\limits_{\ell=0}^{n-1} \left(\frac{1-\lambda^\ell}{1-\lambda}\right)^2 }\\
= & \sqrt{ \frac{n ( 1- \lambda^2) - \lambda(1-\lambda^n)(2+\lambda-\lambda^{n+1})}{(1-\lambda)^3 (\lambda + 1)} }. \\
\end{aligned}
\end{equation}
When analyzing the vector \eqref{eq:multiplied0}, we have assumed that the adversary obtains the first $n$ updates. 
In case the adversary obtains a sequence of $n$ subsequent updates that do not include the first update, there will be an additional $-\lambda Z$ noise term in the first update, and,
by the data-processing inequality the guarantees derived here give an upper bound then as well. 

Using the calculated sensitivity of Eq.~\eqref{eq:sequence_sensitivity}, we can analyze how well the bounded-length sequences of subsequent updates are protected using the noise correction as compared to the vanilla DP Gradient Descent (Figure~\ref{fig:mnist_fig_subsequent}).




\subsection{Noise Correction as a Matrix Mechanism} \label{sec:matrix_mechanism}

The proposed noise correction given in Eq.~\eqref{eq:er_corr1} and Eq.~\eqref{eq:er_corr2}
can also be seen as a so-called matrix mechanism~\cite{denisov2022improved}. These mechanisms are defined by matrices $A,B,C \in \mathbb{R}^{n \times n}$, $A=BC$, and are of the form
$$
\widetilde{A x} = B(Cx + Z),
$$
where the matrix $A \in \mathbb{R}^{n \times n}$ is called the workload matrix, $B$ the decoder matrix and $C$ the encoder matrix. Here the rows of the matrix $x \in \mathbb{R}^{n \times d}$ correspond in our case to the model gradients and $Z \in \mathbb{R}^{n \times d}$ is appropriately scaled isotropic Gaussian noise. 

Writing
$$
x = \begin{bmatrix} f(x,t_0)^T \\ f(x,t_1)^T \\ \vdots \\ f(x,t_{n-1})^T \end{bmatrix}, \quad 
Z = \begin{bmatrix} \xi_1^T \\ \xi_2^T \\ \vdots \\ \xi_{n-1}^T \end{bmatrix},
$$
we see that $n$ steps of the error correction formula can be written as a matrix mechanism $B(Cx + Z)$ for
$$
B = \begin{bmatrix} 1 & & & & \\
-\lambda & 1 & & & \\
& - \lambda & 1 & & \\
 &  & \ddots & \ddots \\
  &  &  & - \lambda  & 1 
\end{bmatrix}, \quad C=B^{-1},
$$
where the matrix inverse $B^{-1}$ is given in Eq.~\eqref{eq:jordan_inverse}.


\section{Evaluation Setup}
\label{sec:evaluation-setup}
\textbf{Testbed Configuration.} For system performance evaluation using CPU TEEs, we deploy AMD SEV-SNP VMs on an AMD server equipped with an AMD EPYC 7763 64-Core (128 vCPUs) processor operating at 3.5GHz. Each AMD SEV-SNP VM is configured with 12 \mbox{vCPUs} and 64GB of memory. As we do not have bare-metal access to local servers with NVIDIA H100 GPUs, we set up a cloud evaluation cluster on Azure to assess the system performance with GPU TEE acceleration. This cluster consists of 6 nodes: one CVM with 32 vCPUs and 128 GB of memory, and five CVMs each featuring 40 vCPUs, 300GB of memory, and one H100 GPU in TEE mode. The bandwidth between Azure CVM and H100 GPU is approximately 8 GB/s due to the GPU traffic encryption/decryption~\cite{nvidiah100cc}.

\noindent\textbf{Workloads.} We evaluate the following three models:
\begin{itemize} 
\item \textbf{MLP3}: A three-layer MLP model suitable for simple tasks like MNIST, though small in size.  This model is trained over the MNIST dataset~\cite{lecun1998gradient}, which contains 70K grayscale images (60K for training, 10K for testing) of handwritten digits (0–9), each 28×28 pixels.

\item \textbf{CNN6}: A six-layer CNN model, commonly used in differential privacy experiments~\cite{tian2022sphinx}. This model is trained over the CIFAR-10 dataset~\cite{krizhevsky2009learning}, which contains 60K RGB images (50K for training, 10K for testing) of objects in ten categories (e.g., airplane, automobile, bird), each 32×32 pixels.


\item \textbf{Roberta-base}: A pre-trained large language model with 12 transformer layers~\cite{roberta-base}. Dataset is AG-News~\cite{agnews}, which is constructed by assembling titles and description fields of articles from the 4 largest classes (“World”, “Sports”, “Business”, “Sci/Tech”) of AG’s Corpus and has 120K training samples and 7.6K testing samples.
\end{itemize}

We trained the MLP3 and CNN6 models from scratch using the batch sizes of 64, 256, and 1024 per worker (i.e., data handling component), and different numbers of iterations ranging from 100 to 5000, with a learning rate of 0.01. We fine-tuned the Roberta-base model with LoRA using the batch sizes of 32, 64, and 128 per worker, and different numbers of iterations ranging from 1000 to 2500,
with a learning rate of 0.01.  We launched one admin, one model handling, and four data handling components for our experiments.

\if 0

\noindent\textbf{Baselines.} We compared \sys with three baseline systems. The first, \textbf{FL with DP}, represents the Federated Learning system enhanced with DP-SGD~\cite{abadi2016deep}. The second baseline is Pencil~\cite{liu2024pencil}, an MPC and HE-based private training framework for collaborative learning. In our evaluation, Pencil has two different configurations depending on execution hardwares, referred to as \textbf{Pencil (CPU)} and \textbf{Pencil (GPU)}. The third is a TEE-based collaborative learning framework, \textbf{Citadel}~\cite{zhang2021citadel}. We use an AMD SEV-SNP VM to replace the Intel SGX enclave for two reasons: 1) it facilitates a fair baseline performance comparison, and 2) SGX does not efficiently support multi-processing. 


\fi

\end{document}
\endinput